\newtheorem{theorem}{Theorem}[section]
\newtheorem{lemma}[theorem]{Lemma}
\newtheorem{proposition}[theorem]{Proposition}
\newtheorem{corollary}[theorem]{Corollary}
\theoremstyle{definition}
\newtheorem{definition}[theorem]{Definition}
\newtheorem{example}[theorem]{Example}
\theoremstyle{remark}
\newtheorem{remark}[theorem]{Remark}
\numberwithin{equation}{section}
\begin{document}
\title[Triple cyclic codes over $\mathbb{Z}_2$]{Triple cyclic codes over $\mathbb{Z}_2$}
\author[Mostafanasab]{Hojjat Mostafanasab}

\date{17 Sep 2015}
\subjclass[2010]{Primary 94B05, 94B15; Secondary 11T71, 13M99}
\keywords{Binary linear codes, Triple cyclic codes, Dual codes.}

\begin{abstract}
Let $r,s,t$ be three positive integers and $\mathcal{C}$ be a binary linear code of lenght $r+s+t$.
We say that $\mathcal{C}$ is a {\it triple cyclic code of lenght $(r,s,t)$ over $\mathbb{Z}_2$} if the set of coordinates can be partitioned into three parts that any cyclic shift of the coordinates of the parts leaves invariant the code. These codes can be considered as
$\mathbb{Z}_2[x]$-submodules of 
$\frac{\mathbb{Z}_2[x]}{\langle x^r-1\rangle}\times\frac{\mathbb{Z}_2[x]}{\langle x^s-1\rangle}\times\frac{\mathbb{Z}_2[x]}{\langle x^t-1\rangle}$.
We give the minimal generating sets of this kind of codes. Also, we determine the relationship between the generators
of triple cyclic codes and their duals.
\end{abstract}

\maketitle

\section{Introduction}
Codes over finite rings have been studied since the early 1970s. 
Recently codes over rings have
generated a lot of interest after a breakthrough paper by Hammons et al. \cite{HK}.
Cyclic codes are amongst the most studied algebraic codes.
Their structure is well known over finite fields \cite{M}. 
\par
In \cite{BFT}, Borges et. al. studied the algebraic structures of $\mathbb{Z}_2$-double cyclic codes as $\mathbb{Z}_2[x]$-submodules of $\mathcal{R}_{r,s}=\frac{\mathbb{Z}_2[x]}{\langle x^r-1\rangle}\times\frac{\mathbb{Z}_2[x]}{\langle x^s-1\rangle}$.
They determined the generator polynomials of this family of codes and their duals. In fact, the double cyclic codes were generalized quasi-cyclic (GQC) codes with index 2 introduced in \cite{Siap} and studied deeply by many other researchers \cite{Cao1,Cao2,Esmaeili,Gao}. 
Also, Gao et. al. \cite{GSWF} investigated double cyclic codes over $\mathbb{Z}_4$.
\par
In Section 2, we give the definition and $\mathbb{Z}_2$-module structure of triple cyclic codes.
In Section 3, we determine the generator polynomials and minimal generating sets of triple cyclic codes. 
In Section 4, we investigate the relationship between the generators of triple cyclic codes and their duals.
\section{Triple cyclic codes over $\mathbb{Z}_2$}
In this paper, suppose that $r,s,t$ are three positive integers and $\mathcal{C}$ is a binary linear code of lenght $n=r+s+t$. This code can be partitioned into
three parts of $r,s$ and $t$ coordinates, respectively.

\begin{definition}
Let $r,s,t$ be positive integers and $\mathcal{C}$ a binary linear code of lenght  $n=r+s+t$.
We say that $\mathcal{C}$ is a {\it triple cyclic code of lenght $(r,s,t)$ over} $\mathbb{Z}_2$ if
{\small
$$c=(c_{1,0},c_{1,1},\dots,c_{1,r-2},c_{1,r-1}\mid c_{2,0},c_{2,1},\dots,c_{2,s-2},c_{2,s-1}\mid c_{3,0},c_{3,1},\dots,c_{3,t-2},c_{3,t-1})\in\mathcal{C}$$}
implies that 
{\small
$$\mathcal{T}(c)=(c_{1,r-1},c_{1,0},c_{1,1},\dots,c_{1,r-2}\mid c_{2,s-1},c_{2,0},c_{2,1},\dots,c_{2,s-2}\mid c_{3,t-1},c_{3,0},c_{3,1}\dots,c_{3,t-2})\in\mathcal{C}.$$}
\end{definition}

Let $\mathcal{C}$ be a triple cyclic code of lenght $(r,s,t)$ over $\mathbb{Z}_2$.
Let $\mathcal{C}_r$ be the canonical projection of $\mathcal{C}$ on the first $r$ coordinates,
$\mathcal{C}_s$ on the second $s$ coordinates
and $\mathcal{C}_t$ on the last $t$ coordinates. It is easy to see that $\mathcal{C}_r$, $\mathcal{C}_s$ and
$\mathcal{C}_t$ are binary cyclic codes of lenght $r,s$ and $t$, respectively. A triple cyclic code $\mathcal{C}$
is called {\it separable} if $\mathcal{C}=\mathcal{C}_r\times\mathcal{C}_s\times\mathcal{C}_t$.

Let $\mathcal{R}_{r,s,t}$ be the ring $\frac{\mathbb{Z}_2[x]}{\langle x^r-1\rangle}\times\frac{\mathbb{Z}_2[x]}{\langle x^s-1\rangle}\times\frac{\mathbb{Z}_2[x]}{\langle x^t-1\rangle}$. The map
$$\Psi:\mathbb{Z}_2^r\times\mathbb{Z}_2^s\times\mathbb{Z}_2^t\to\mathcal{R}_{r,s,t}$$
which maps
{\small 
$$(u_{1,0},u_{1,1},\dots,u_{1,r-1}\mid u_{2,0},u_{2,1},\dots,u_{2,s-1}\mid u_{3,0},u_{3,1},\dots,u_{3,t-1})$$}
to
{\small
$$(u_{1,0}+u_{1,1}x+\dots+u_{1,r-1}x^{r-1}\mid u_{2,0}+u_{2,1}x+\dots+u_{2,s-1}x^{s-1}\mid u_{3,0}+u_{3,1}x+\dots+u_{3,t-1}x^{t-1})$$}
is an isomorphism of $\mathbb{Z}_2$-modules. We denote the image of a vector ${u}\in\mathbb{Z}_2^r\times\mathbb{Z}_2^s\times\mathbb{Z}_2^t$ by $u(x)$. 

\begin{definition}
We define the multiplication $$*:\mathbb{Z}_2[x]\times\mathcal{R}_{r,s,t}\to\mathcal{R}_{r,s,t}$$ as
$$\lambda(x)*\big(u_1(x)\mid u_2(x)\mid u_3(x)\big)=\big(\lambda(x)u_1(x)\mid\lambda(x)u_2(x)\mid\lambda(x)u_3(x)\big),$$ where $\lambda(x)\in\mathbb{Z}_2[x]$ and
$\big(u_1(x)\mid u_2(x)\mid u_3(x)\big)\in\mathcal{R}_{r,s,t}$.
\end{definition}
The ring $\mathcal{R}_{r,s,t}$ with the external multiplication $*$ is a $\mathbb{Z}_2[x]$-module.

Let $\mathcal{C}$ be a binary linear code of lenght $n$ and let
$$c=(c_{1,0},c_{1,1},\dots,c_{1,r-1}\mid c_{2,0},c_{2,1},\dots,c_{2,s-1}\mid c_{3,0},c_{3,1},\dots,c_{3,t-1})$$
be a codeword in $\mathcal{C}$. Note that $x*c(x)$ is equal to
{\small 
\begin{eqnarray*}
(c_{1,r-1}+c_{1,0}x+\dots+c_{1,r-2}x^{r-1}\mid c_{2,s-1}+c_{2,0}x+\dots+c_{2,s-2}x^{s-1}\\
\mid c_{3,t-1}+c_{3,0}x+\dots+c_{3,t-2}x^{t-1})
\end{eqnarray*}}
in $\mathcal{R}_{r,s,t}$, which is the image of $$(c_{1,r-1},c_{1,0},\dots,c_{1,r-2},\mid c_{2,s-1},c_{2,0},\dots,c_{2,s-2},\mid c_{3,t-1},c_{3,0},\dots,c_{3,t-2})$$ under $\Psi$.
Therefore $\mathcal{C}$ is a triple cyclic code if whenever $c(x)\in\mathcal{C}$, then $x*c(x)\in\mathcal{C}$ in $\mathcal{R}_{r,s,t}$.

\section{Properties of triple cyclic codes over $\mathbb{Z}_2$}

For a linear code $\mathcal{C}$, the {\it minimum Hamming distance} $d(\mathcal{C})$ is defined by
$$d(\mathcal{C})={\rm min}\{{\rm wt}(c)\mid 0\neq c\in\mathcal{C}\}.$$
For a linear code $\mathcal{C}$ with parity-check matrix $H$, any $d(\mathcal{C})-1$ columns of $H$ are linearly independent and $H$ has $d(\mathcal{C})$ columns that are linearly dependent.
\begin{proposition}
Let $\mathcal{C}$ be a triple cyclic code of lenght $(r,s,t)$ over $\mathbb{Z}_2$. 
\begin{enumerate}
\item $d(\mathcal{C})\geq min\{d(\mathcal{C}_r),d(\mathcal{C}_s),d(\mathcal{C}_t)\}.$
\item If $\mathcal{C}$ is separable, then $d(\mathcal{C})=min\{d(\mathcal{C}_r),d(\mathcal{C}_s),d(\mathcal{C}_t)\}.$
\end{enumerate}
\end{proposition}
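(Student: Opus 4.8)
The plan is to handle the two parts separately, exploiting the block decomposition $c=(c^{(1)}\mid c^{(2)}\mid c^{(3)})$ of each codeword, where $c^{(1)},c^{(2)},c^{(3)}$ occupy the first $r$, middle $s$, and last $t$ coordinates. The fundamental observation used throughout is that the Hamming weight is additive across the blocks, i.e. ${\rm wt}(c)={\rm wt}(c^{(1)})+{\rm wt}(c^{(2)})+{\rm wt}(c^{(3)})$, and that each projection $c^{(i)}$ is itself a codeword of the corresponding cyclic code $\mathcal{C}_r$, $\mathcal{C}_s$, or $\mathcal{C}_t$ by the definition of the canonical projections.

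For part (1), I would first take a nonzero codeword $c\in\mathcal{C}$ of minimum weight, so that ${\rm wt}(c)=d(\mathcal{C})$. Since $c\neq 0$, at least one of its three projections is nonzero; say $c^{(1)}\neq 0$ (the other two cases are symmetric). Then $c^{(1)}$ is a nonzero word of $\mathcal{C}_r$, so ${\rm wt}(c^{(1)})\geq d(\mathcal{C}_r)$. Combining this with weight additivity yields
$$d(\mathcal{C})={\rm wt}(c)\geq {\rm wt}(c^{(1)})\geq d(\mathcal{C}_r)\geq \min\{d(\mathcal{C}_r),d(\mathcal{C}_s),d(\mathcal{C}_t)\},$$
which is the asserted inequality.

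For part (2), the inequality $d(\mathcal{C})\geq\min\{d(\mathcal{C}_r),d(\mathcal{C}_s),d(\mathcal{C}_t)\}$ is already furnished by part (1), so it remains to prove the reverse inequality using separability. Here I would assume without loss of generality that the minimum is attained by $d(\mathcal{C}_r)$ and choose a nonzero $u\in\mathcal{C}_r$ with ${\rm wt}(u)=d(\mathcal{C}_r)$. Because $\mathcal{C}=\mathcal{C}_r\times\mathcal{C}_s\times\mathcal{C}_t$ and both $\mathcal{C}_s$ and $\mathcal{C}_t$ contain the zero word, the vector $(u\mid 0\mid 0)$ is a nonzero codeword of $\mathcal{C}$ of weight $d(\mathcal{C}_r)$. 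Hence $d(\mathcal{C})\leq d(\mathcal{C}_r)=\min\{d(\mathcal{C}_r),d(\mathcal{C}_s),d(\mathcal{C}_t)\}$, and together with part (1) this gives equality.

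The argument is largely routine, so the only genuine content is pinpointing where separability enters. In part (1) one obtains merely an inequality because a minimum-weight codeword of $\mathcal{C}$ may distribute its support over several blocks and need not restrict to a minimum-weight word in any single $\mathcal{C}_i$. The separability hypothesis in part (2) is exactly what permits embedding a minimum-weight word of the lightest factor back into $\mathcal{C}$ while keeping the remaining coordinates zero. Accordingly, I expect the main (modest) obstacle to be articulating clearly why this embedding is legitimate precisely in the separable case and can fail otherwise, rather than any computational difficulty.
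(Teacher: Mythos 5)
Your proof is correct and follows essentially the same route as the paper: for (1) you project a minimum-weight codeword onto a nonzero block and chain the weight inequalities, and for (2) you use separability to embed a minimum-weight word of the lightest projection as $(u\mid 0\mid 0)\in\mathcal{C}$, giving the reverse inequality. No gaps; the argument matches the paper's proof step for step.
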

\begin{proof}
(1) There exists a nonzero codeword $(c_r\mid c_s\mid c_t)$ of minimum distance in $\mathcal{C}$ such that
$d(\mathcal{C})={\rm wt}((c_r\mid c_s\mid c_t))$. Without loss of generality we may assume that $c_r\neq0$.
Therefore
$$d(\mathcal{C})={\rm wt}((c_r\mid c_s\mid c_t))\geq {\rm wt}(c_r)\geq d(\mathcal{C}_r)\geq {\rm min}\{d(\mathcal{C}_r),d(\mathcal{C}_s),d(\mathcal{C}_t)\}.$$
(2) Suppose that  $\mathcal{C}$ is separable. Assume that 
${\rm min}\{d(\mathcal{C}_r),d(\mathcal{C}_s),d(\mathcal{C}_t)\}=d(\mathcal{C}_r)$.
Let $0\neq c_r\in\mathcal{C}_r$ be such that $d(\mathcal{C}_r)={\rm wt}(c_r)$. On the other hand
$(c_r\mid0\mid0)\in\mathcal{C}$. So
$$d(\mathcal{C})\leq{\rm wt}((c_r\mid 0\mid 0))={\rm wt}(c_r)=d(\mathcal{C}_r)={\rm min}\{d(\mathcal{C}_r),d(\mathcal{C}_s),d(\mathcal{C}_t)\}.$$
Hence, by part (1) the claim holds.
\end{proof}

We know that $\mathcal{R}_{r,s,t}$ is a Noetherian $\mathbb{Z}_2[x]$-module, and so a triple cyclic code $\mathcal{C}$ as a
$\mathbb{Z}_2[x]$-submodule of $\mathcal{R}_{r,s,t}$ is finitely generated.

\begin{theorem}\label{T1}
Let $\mathcal{C}$ be a triple cyclic code of lenght $(r,s,t)$ over $\mathbb{Z}_2$. Then
$$\mathcal{C}=\big\langle\big(F_1(x)\mid0\mid0\big),\big(0\mid F_2(x)\mid0\big),\big(G_1(x)\mid G_2(x)\mid G_3(x)\big)\big\rangle$$ where
$F_1(x),F_2(x),G_1(x),G_2(x),G_3(x)\in\mathbb{Z}_2[x]$ with $F_1(x)\mid x^r-1$, $F_2(x)\mid x^s-1$ and $G_3(x)\mid x^t-1$.
\end{theorem}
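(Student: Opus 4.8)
The plan is to construct the three generators by peeling off the coordinates one at a time, relying on the classical fact (used already for $\mathcal{C}_r,\mathcal{C}_s,\mathcal{C}_t$) that every binary cyclic code of length $n$ is a principal ideal of $\frac{\mathbb{Z}_2[x]}{\langle x^n-1\rangle}$ generated by a divisor of $x^n-1$. First I would take the $\mathbb{Z}_2[x]$-module homomorphism $\pi:\mathcal{R}_{r,s,t}\to\frac{\mathbb{Z}_2[x]}{\langle x^t-1\rangle}$ that projects onto the third coordinate. Its restriction to $\mathcal{C}$ has image $\mathcal{C}_t$, a cyclic code, so $\mathcal{C}_t=\langle G_3(x)\rangle$ with $G_3(x)\mid x^t-1$. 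Fixing any codeword $\big(G_1(x)\mid G_2(x)\mid G_3(x)\big)\in\mathcal{C}$ that maps onto $G_3(x)$ supplies the third generator. For an arbitrary $(a\mid b\mid c)\in\mathcal{C}$ we have $c\in\langle G_3\rangle$, say $c=\lambda(x)G_3$, and subtracting $\lambda(x)*\big(G_1\mid G_2\mid G_3\big)$ kills the third coordinate. Hence $\mathcal{C}=\langle\big(G_1\mid G_2\mid G_3\big)\rangle+\mathcal{D}$, where $\mathcal{D}=\mathcal{C}\cap\ker\pi$ is the set of codewords of the shape $(a\mid b\mid 0)$.

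Next I would analyse $\mathcal{D}$, which is a $\mathbb{Z}_2[x]$-submodule supported on the first two components, that is, a double cyclic code. Projecting $\mathcal{D}$ onto the second coordinate yields a cyclic code $\langle F_2(x)\rangle$ with $F_2(x)\mid x^s-1$, and a preimage of $F_2$ inside $\mathcal{D}$ provides the middle generator. Those codewords of $\mathcal{D}$ whose second coordinate is zero form a cyclic code $\langle F_1(x)\rangle$ in the first component with $F_1(x)\mid x^r-1$, giving the first generator. Repeating the coordinate-reduction argument from the previous paragraph then shows that each element of $\mathcal{D}$, and therefore of $\mathcal{C}$, lies in the $\mathbb{Z}_2[x]$-module spanned by the three displayed elements; the divisibility conditions $F_1\mid x^r-1$, $F_2\mid x^s-1$, $G_3\mid x^t-1$ are immediate from the classification of cyclic codes as ideals generated by divisors of $x^n-1$.

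The step I expect to be the main obstacle is justifying that the middle generator may be taken in the pure form $\big(0\mid F_2(x)\mid 0\big)$. The natural preimage of $F_2$ in $\mathcal{D}$ is an element $\big(\ell(x)\mid F_2(x)\mid 0\big)$ whose first coordinate $\ell(x)$ is a priori nonzero and is only determined modulo $\langle F_1\rangle$; removing it amounts to controlling the interaction between the first two components of $\mathcal{D}$. This is precisely where care is needed, and I would expect to either normalise $\ell(x)$ using $\big(F_1\mid 0\mid 0\big)$ together with the annihilator multiples $\lambda(x)*\big(G_1\mid G_2\mid G_3\big)$ with $\lambda(x)=(x^t-1)/G_3(x)$, or to record an explicit divisibility relation linking $\ell(x)$ to $F_1(x)$. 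Pinning down this first-coordinate term (or, if it cannot be eliminated, carrying it as an extra summand in the middle generator) is the crux of the argument, whereas the remaining membership and divisibility verifications are routine.
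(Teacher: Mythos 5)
The obstacle you flagged in your last paragraph is not merely the crux --- it is insurmountable, because the theorem as stated is false: the first coordinate $\ell(x)$ of the middle generator cannot be removed in general. Concretely, take $r=s=t=2$ and let $\mathcal{C}=\{(a(x)\mid b(x)\mid c(x))\in\mathcal{R}_{2,2,2} : a(x)+b(x)+c(x)=0 \bmod (x^2-1)\}$, i.e.\ the $\mathbb{Z}_2[x]$-submodule generated by $(1\mid 1\mid 0)$ and $(0\mid 1\mid 1)$. This is a triple cyclic code with $16$ codewords, but its only codeword of shape $(a\mid 0\mid 0)$ is $0$, and likewise for $(0\mid b\mid 0)$; hence in any presentation of the theorem's form both $(F_1(x)\mid 0\mid 0)$ and $(0\mid F_2(x)\mid 0)$ must be zero, and $\mathcal{C}$ would have to be the cyclic module generated by $(G_1(x)\mid G_2(x)\mid G_3(x))$ alone, which is annihilated by $x^2-1$ and so has at most $4$ elements --- a contradiction. (For $r=s=t=1$ the even-weight code of length $3$ gives the same contradiction.) Your fallback --- carrying the unremovable term and proving $\mathcal{C}=\big\langle(F_1(x)\mid 0\mid 0),(\ell(x)\mid F_2(x)\mid 0),(G_1(x)\mid G_2(x)\mid G_3(x))\big\rangle$ --- is exactly the correct statement, the precise analogue of the two-generator form $\big\langle(b(x)\mid 0),(\ell(x)\mid a(x))\big\rangle$ of \cite{BFT}, and your two-step peeling argument (third coordinate, then second, then first) does prove it; the divisibility conditions $F_1(x)\mid x^r-1$, $F_2(x)\mid x^s-1$, $G_3(x)\mid x^t-1$ survive unchanged.

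For what it is worth, the paper's own proof breaks at exactly the point you identified. It sets $\mathcal{I}=\{(c_1(x)\mid c_2(x)) : (c_1(x)\mid c_2(x)\mid 0)\in\mathcal{C}\}$, asserts that $\mathcal{I}$ is an ideal of $\mathcal{R}_{r,s}$, and concludes that $\mathcal{I}=\mathcal{I}_1\times\mathcal{I}_2$ splits as a product of ideals of the two factors. But a triple cyclic code is closed only under the diagonal action $\lambda(x)*(c_1\mid c_2\mid c_3)=(\lambda c_1\mid\lambda c_2\mid\lambda c_3)$, so $\mathcal{I}$ is merely a $\mathbb{Z}_2[x]$-submodule of $\mathcal{R}_{r,s}$ --- that is, a double cyclic code --- and not an ideal of the product ring, which would require closure under multiplication by $(1,0)$ and $(0,1)$ separately. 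Double cyclic codes need not split as products: in the counterexample above, $\mathcal{I}$ is the diagonal $\{(a(x)\mid a(x))\}$, whose projections onto both factors are everything yet which is proper. So do not try to force the pure form $(0\mid F_2(x)\mid 0)$; state and prove the three-generator decomposition with the mixed middle generator instead.
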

\begin{proof}
Let $\Phi:\mathcal{C}\to\frac{\mathbb{Z}_2[x]}{\langle x^t-1\rangle}$ be the canonical projection of $\mathbb{Z}_2[x]$-modules
defined by $\Phi((c_1(x)\mid c_2(x)\mid c_3(x)))=c_3(x)$. 
Since ${\rm Im}(\Phi)$ is an ideal of $\frac{\mathbb{Z}_2[x]}{\langle x^t-1\rangle}$, then there exists $G_3(x)\in\mathbb{Z}_2[x]$ with $G_3(x)\mid x^t-1$ such that ${\rm Im}(\Phi)=\langle G_3(x)\rangle$.
We know that $${\rm Ker}(\Phi)=\{(c_1(x)\mid c_2(x)\mid0)\in\mathcal{R}_{r,s,t}\mid(c_1(x)\mid c_2(x)\mid0)\in\mathcal{C}\}.$$ Define 
$$\mathcal{I}=\{(c_1(x)\mid c_2(x))\in\mathcal{R}_{r,s}\mid(c_1(x)\mid c_2(x)\mid0)\in{\rm Ker}(\Phi)\}.$$
It is easy to check that $\mathcal{I}$ is an ideal of $\mathcal{R}_{r,s}$. So, $\mathcal{I}=\mathcal{I}_1\times\mathcal{I}_2$
for some ideal $\mathcal{I}_1$ of $\frac{\mathbb{Z}_2[x]}{\langle x^r-1\rangle}$ and some ideal $\mathcal{I}_2$ of $\frac{\mathbb{Z}_2[x]}{\langle x^s-1\rangle}$. Again, there are $F_1(x),F_2(x)\in\mathbb{Z}_2[x]$ with $F_1(x)\mid x^r-1$, $F_2(x)\mid x^s-1$ such that $\mathcal{I}_1=\langle F_1(x)\rangle$ and $\mathcal{I}_2=\langle F_2(x)\rangle$. Therefore
$\mathcal{I}=\big\langle\big(F_1(x)\mid0\big),\big(0\mid F_2(x)\big)\big\rangle$. Now, we can easily see that
$${\rm Ker}(\Phi)=\big\langle\big(F_1(x)\mid0\mid0\big),\big(0\mid F_2(x)\mid0\big)\big\rangle.$$ On the other hand, by the first isomorphism theorem,
we have $$\frac{\mathcal{C}}{{\rm Ker}(\Phi)}\simeq\langle G_3(x)\rangle.$$
Let $\big(G_1(x)\mid G_2(x)\mid G_3(x)\big)\in\mathcal{C}$ be such that $\Phi((G_1(x)\mid G_2(x)\mid G_3(x)))=G_3(x)$. Consequently
$\mathcal{C}$ as a $\mathbb{Z}_2[x]$-submodule of $\mathcal{R}_{r,s,t}$ is generated by elements 
$\big(F_1(x)\mid0\mid0\big)$,$\big(0\mid F_2(x)\mid0\big)$ and $\big(G_1(x)\mid G_2(x)\mid G_3(x)\big)$.
\end{proof}

\begin{remark}
Notice that if in a triple cyclic code $\mathcal{C}=\big\langle\big(F_1(x)\mid0\mid0\big),\big(0\mid F_2(x)\mid0\big),\big(G_1(x)\mid G_2(x)\mid G_3(x)\big)\big\rangle$ we have $G_3(x)=0$, then we may consider $\mathcal{C}$ as a double cyclic code
which was investigated in \cite{BFT}.
\end{remark}

We recall that, the {\it reciprocal polynomial} $f^*(x)$ of a
polynomial $f(x)=a_0+a_1x+\dots+a_nx^n$ is the polynomial $f^*(x)=a_n+a_{n-1}x+\dots+a_0x^n=x^nf(\frac{1}{x})$.\\
Also, we denote the polynomial $\sum\limits_{i=0}^{n-1}x^i$ by $\theta_n(x)$.

\begin{proposition}
Let $f(x),g(x)$ be two polynomials in $\mathbb{Z}_2[x]$ with $deg(f(x))\geq deg(g(x))$. Then the following conditions hold:
\begin{enumerate}
\item $deg(f(x))=deg(f^*(x))$.
\item $(f^*)^*(x)=f(x)$.
\item $(fg)^*(x)=f^*(x)g^*(x)$.
\item $(f+g)^*(x)=f^*(x)+x^{deg(f(x))-deg(g(x))}g^*(x)$.
\item $g(x)\mid f(x)$ if and only if $g^*(x)\mid f^*(x)$.
\item $gcd(f(x),g(x))^*=gcd(f^*(x),g^*(x))$.
\item $lcm(f(x),g(x))^*=lcm(f^*(x),g^*(x))$.
\end{enumerate}
\end{proposition}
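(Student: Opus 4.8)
The plan is to treat all seven identities as consequences of the single substitution formula $f^*(x)=x^{\deg f}\,f(1/x)$, read off after clearing denominators, together with the observation that $f^*$ is obtained from $f$ by reversing its coefficient string. First I would record the effect on the extreme coefficients: the leading coefficient of $f^*$ is the constant term $f(0)=a_0$, and the constant term of $f^*$ is the leading coefficient $a_n$ of $f$. This gives (1) at once, with the proviso that $\deg f=\deg f^*$ holds exactly when $a_0\neq 0$; I would flag this as the one genuinely delicate point, since if $x\mid f$ then reciprocation drops the degree. In the setting of this paper the relevant polynomials all divide $x^r-1$, $x^s-1$ or $x^t-1$, whose constant terms are nonzero, so $f(0)\neq 0$ is automatic and the convention $f^*(x)=x^{\deg f}f(1/x)$ is unambiguous. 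Granting $a_0\neq 0$, identity (2) is then a one-line computation: $(f^*)^*(x)=x^{n}f^*(1/x)=x^{n}\cdot x^{-n}f(x)=f(x)$.

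Next I would dispatch the two arithmetic laws directly. For (3), since $\mathbb{Z}_2$ is a field we have $\deg(fg)=\deg f+\deg g$, so $(fg)^*(x)=x^{\deg f+\deg g}f(1/x)g(1/x)=\big(x^{\deg f}f(1/x)\big)\big(x^{\deg g}g(1/x)\big)=f^*(x)g^*(x)$, with no extra hypothesis needed. For (4) I would compare coefficients after reversing strings: writing $m=\deg f\geq n=\deg g$, the coefficient of $x^j$ in $f^*+x^{m-n}g^*$ is $a_{m-j}+b_{m-j}$, which is exactly the reversal of $f+g$ read with nominal top degree $m$. This matches $(f+g)^*$ cleanly when $m>n$; when $m=n$ the two leading terms cancel over $\mathbb{Z}_2$, and one must read $(f+g)^*$ relative to the nominal degree $m$ for the stated equality to hold, a point I would make explicit.

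The remaining divisibility statements then follow formally. For (5), if $g\mid f$ write $f=gh$ and apply (3) to get $f^*=g^*h^*$, whence $g^*\mid f^*$; the converse is obtained by applying this implication to $f^*,g^*$ and invoking the involution (2). For (6) and (7) I would argue by mutual divisibility rather than by tracking degrees. Using (5) and (2): from $\gcd(f,g)\mid f$ and $\gcd(f,g)\mid g$ I obtain $\gcd(f,g)^*\mid \gcd(f^*,g^*)$, and from $\gcd(f^*,g^*)\mid f^*,g^*$ together with (2) I obtain $\gcd(f^*,g^*)\mid \gcd(f,g)^*$; since the only unit of $\mathbb{Z}_2[x]$ is $1$, mutual divisibility upgrades to equality. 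The same bracketing, with the roles of divisor and multiple exchanged, proves (7); alternatively one may combine (3) and (6) through $\operatorname{lcm}(f,g)=fg/\gcd(f,g)$.

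The step I expect to be the main obstacle is the constant-term bookkeeping concentrated in (1), (2) and (4): the identity $f^*(x)=x^{\deg f}f(1/x)$, the involutivity $(f^*)^*=f$, and the degree equality all quietly require $f(0)\neq 0$, and the additive law needs the nominal-degree reading when $\deg f=\deg g$. Once these conventions are pinned down, the multiplicative identity (3) carries all the real content, and (5)--(7) become purely formal, using only that $\mathbb{Z}_2[x]$ has trivial unit group so that polynomials dividing one another are equal.
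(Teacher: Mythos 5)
Your proof is correct, and on the two items that carry real content it takes a genuinely different route from the paper. For (1)--(5) the substance agrees: the paper calls (1),(2) easy, cites Lemma 4.3 of \cite{GG} for (3),(4) (which you instead verify directly by substitution and coefficient reversal), and gets (5) from (2),(3) exactly as you do. The divergence is in (6): the paper obtains $\gcd(f,g)^*\mid\gcd(f^*,g^*)$ from (5) just as you do, but for the reverse divisibility it invokes a B\'ezout identity $\gcd(f,g)=u(x)f(x)+v(x)g(x)$ and applies the additive formula (4) to write $\gcd(f,g)^*=u^*(x)f^*(x)+x^{l}v^*(x)g^*(x)$, from which $\gcd(f^*,g^*)\mid\gcd(f,g)^*$; you instead argue purely divisibility-theoretically, passing from $\gcd(f^*,g^*)\mid f^*,g^*$ through (5) and the involution (2) to $\gcd(f^*,g^*)\mid\gcd(f,g)^*$, and closing by mutual divisibility plus the triviality of the unit group of $\mathbb{Z}_2[x]$. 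Your route buys robustness at precisely the paper's weak point: in a B\'ezout identity the leading terms of $u(x)f(x)$ and $v(x)g(x)$ typically cancel (the gcd has much smaller degree), so formula (4) does not literally apply there and one must insert a power of $x$ and then discard it using the fact that $\gcd(f^*,g^*)$ has nonzero constant term; your argument never meets this issue, at the price of leaning on the involutivity (2) and hence on the nonvanishing of constant terms, which you flag. For (7) the paper uses $\mathrm{lcm}(f,g)\gcd(f,g)=f(x)g(x)$ with (3),(6), which you list as an alternative to your dual mutual-divisibility argument. Your explicit bookkeeping of the degenerate cases is also a genuine improvement over the paper, since the proposition as stated fails without it: for $g(x)=x$ and $f(x)=x^2+1$ one has $g^*(x)=1\mid f^*(x)$ yet $g(x)\nmid f(x)$, so (5) really does require the nonzero-constant-term convention, which is automatic in the paper's intended application to divisors of $x^r-1$, $x^s-1$, $x^t-1$.
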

\begin{proof}
(1) and (2) are easy.\\
For (3) and (4) see Lemma 4.3 of \cite{GG}.\\
(5) By parts (2) and (3).\\
(6) Since $gcd(f(x),g(x))$ divides both $f(x),g(x)$, then by part (5) it follows that  $gcd(f(x),g(x))^*$ divides $f^*(x),g^*(x)$.
Hence $gcd(f(x),g(x))^*\mid gcd(f^*(x),g^*(x))$. On the other hand there are two polynomials
$u(x),v(x)\in\mathbb{Z}_2[x]$ such that $$gcd(f(x),g(x))=u(x)f(x)+v(x)g(x).$$
Without loss of generality we may assume that $deg(u(x)f(x))\geq\deg(v(x)g(x))$. Set $l=deg(u(x)f(x))-\deg(v(x)g(x))$.
Therefore $$gcd(f(x),g(x))^*=u^*(x)f^*(x)+x^{l}v^*(x)g^*(x),$$
by part (4). So
$gcd(f^*(x),g^*(x))\mid gcd(f(x),g(x))^*$. Consequently $gcd(f(x),g(x))^*=gcd(f^*(x),g^*(x))$.\\
(7) Use the equlity $lcm(f(x),g(x))gcd(f(x),g(x))=f(x)g(x)$ and parts (3), (6).
\end{proof}

\begin{lemma}
Let ${a}=(a_0,a_1,\dots,a_{n-1})$ and ${b}=(b_0,b_1,\dots,b_{n-1})$ be vectors in $\mathbb{Z}^n_2$ with associated
polynomials $a(x)$ and $b(x)$. Then ${a}$ is orthogonal to ${b}$ and all its cyclic shifts if and only if
$a(x)b^*(x)=0$ mod $(x^n-1)$. 
\end{lemma}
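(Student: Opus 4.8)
The plan is to show that the $n$ coefficients of the reduced product $a(x)b^*(x) \bmod (x^n-1)$ are, up to a relabelling of indices, precisely the $n$ inner products of $a$ with the cyclic shifts of $b$. The equivalence then falls out immediately, since a polynomial in $\mathbb{Z}_2[x]/\langle x^n-1\rangle$ is zero if and only if all of its coefficients vanish.

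First I would fix conventions. Regarding $b(x)$ as a polynomial of formal degree $n-1$, its reciprocal is $b^*(x)=\sum_{k=0}^{n-1}b_k x^{n-1-k}=x^{n-1}b(1/x)$. Following the discussion preceding this lemma, the $j$-th cyclic shift of $b$ is the vector $b^{(j)}$ whose associated polynomial is $x^j b(x)\bmod(x^n-1)$; reading off coefficients, the $i$-th entry of $b^{(j)}$ is $b_{i-j}$, where the index $i-j$ is taken modulo $n$. Thus the quantity to be controlled is $\langle a, b^{(j)}\rangle=\sum_{i=0}^{n-1}a_i b_{i-j}$ for $j=0,1,\dots,n-1$, these being exactly the inner products of $a$ with $b$ and all its cyclic shifts.

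Next I would expand and reduce. Writing $a(x)b^*(x)=\sum_{i,k}a_i b_k\, x^{\,i-k+n-1}$ and using $x^n\equiv 1$, the coefficient of $x^m$ in $a(x)b^*(x)\bmod(x^n-1)$ is the sum of $a_i b_k$ over all pairs with $i-k+n-1\equiv m\pmod n$, that is, with $k\equiv i-1-m\pmod n$. Hence this coefficient equals $\sum_{i=0}^{n-1}a_i b_{i-1-m}$, which is exactly $\langle a, b^{(m+1)}\rangle$ with the shift index reduced modulo $n$. As $m$ runs over $\{0,\dots,n-1\}$, the index $j=m+1$ runs bijectively over all residues modulo $n$, so the set of coefficients of $a(x)b^*(x)\bmod(x^n-1)$ coincides with $\{\langle a, b^{(j)}\rangle : 0\le j\le n-1\}$.

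Finally, the conclusion is immediate: $a(x)b^*(x)\equiv 0\pmod{x^n-1}$ if and only if every coefficient of the reduced product vanishes, if and only if $\langle a, b^{(j)}\rangle=0$ for all $j$, if and only if $a$ is orthogonal to $b$ and to each of its cyclic shifts. The only delicate point, and the one I would check most carefully, is the index bookkeeping modulo $n$: pinning down the direction of the cyclic shift so that shifting corresponds to multiplication by $x$ (matching the shift operator $\mathcal{T}$), and using the length-$n$ reciprocal $b^*(x)=x^{n-1}b(1/x)$ rather than the degree-dependent reciprocal of the earlier proposition. Once these conventions are fixed, the matching of coefficients to inner products is a routine computation.
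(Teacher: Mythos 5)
Your proposal is correct, but note that the paper does not actually prove this lemma at all: its ``proof'' is a one-line citation to Lemma 4.4.8 of Huffman--Pless \cite{HP}. What you have written is a self-contained reconstruction of that textbook argument, and the computation checks out: with the shift convention matching $\mathcal{T}$ (so that shifting corresponds to multiplication by $x$ mod $x^n-1$), the $i$-th entry of $b^{(j)}$ is $b_{i-j}$, and the coefficient of $x^m$ in $a(x)b^*(x) \bmod (x^n-1)$ is $\sum_i a_i b_{i-1-m} = \langle a, b^{(m+1)}\rangle$, so the $n$ coefficients run exactly over the $n$ inner products and the equivalence follows. One point worth making explicit: you use the fixed-length reciprocal $b^*(x) = x^{n-1}b(1/x)$, whereas the paper defines $f^*(x) = x^{\deg f}f(1/x)$; these differ by the factor $x^{n-1-\deg b}$, which is a unit in $\mathbb{Z}_2[x]/\langle x^n-1\rangle$ (since $x\cdot x^{n-1}\equiv 1$), so the two versions of the condition $a(x)b^*(x)\equiv 0$ are equivalent --- adding that one sentence would make your proof airtight under either convention. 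In short, your argument supplies the proof the paper outsources, which is arguably more useful than the citation itself.
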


\begin{proof}
See Lemma 4.4.8 of \cite{HP}.
\end{proof}

\begin{corollary}
Let $\mathcal{C}$ be a binary cyclic code of lenght $n$ with the dual code
$\mathcal{C}^{\bot}$. Then
$$\mathcal{C}^{\bot}=\{{a}\in\mathbb{Z}_2^n\mid 
a(x)b^*(x)=0 \mbox{ mod } (x^n-1) \mbox{ for every } {b}\in\mathcal{C}\}.$$
\end{corollary}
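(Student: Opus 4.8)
The plan is to derive the identity directly from the preceding Lemma, the only additional input being that $\mathcal{C}$, being cyclic, is closed under cyclic shifts. Recall that by definition $\mathcal{C}^{\bot}=\{a\in\mathbb{Z}_2^n\mid a\cdot c=0 \text{ for all } c\in\mathcal{C}\}$, where $a\cdot c$ is the standard inner product. Write $\mathcal{D}$ for the set on the right-hand side of the claimed equality, and prove the two inclusions $\mathcal{C}^{\bot}\subseteq\mathcal{D}$ and $\mathcal{D}\subseteq\mathcal{C}^{\bot}$ separately.

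First I would show $\mathcal{C}^{\bot}\subseteq\mathcal{D}$. Fix $a\in\mathcal{C}^{\bot}$ and an arbitrary $b\in\mathcal{C}$. Since $\mathcal{C}$ is cyclic, every cyclic shift of $b$ again lies in $\mathcal{C}$, and because $a$ is orthogonal to all of $\mathcal{C}$ it is in particular orthogonal to $b$ and to each of its cyclic shifts. By the Lemma, this is equivalent to $a(x)b^*(x)=0 \bmod (x^n-1)$. As $b\in\mathcal{C}$ was arbitrary, this places $a$ in $\mathcal{D}$.

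For the reverse inclusion $\mathcal{D}\subseteq\mathcal{C}^{\bot}$, take $a\in\mathcal{D}$ and any $c\in\mathcal{C}$. Applying the defining condition of $\mathcal{D}$ with $b=c$ gives $a(x)c^*(x)=0 \bmod (x^n-1)$, which by the Lemma means exactly that $a$ is orthogonal to $c$ together with all its cyclic shifts; in particular $a\cdot c=0$. Since $c\in\mathcal{C}$ was arbitrary, $a\in\mathcal{C}^{\bot}$. Combining the two inclusions yields $\mathcal{C}^{\bot}=\mathcal{D}$.

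There is no serious obstacle here, as all the genuine content is supplied by the Lemma; the point worth highlighting is the asymmetric role played by cyclicity. In the forward direction cyclicity is essential: it is what upgrades mere orthogonality to $\mathcal{C}$ into the stronger orthogonality-to-all-shifts condition that the single polynomial equation $a(x)b^*(x)\equiv 0$ encodes. In the reverse direction we extract only the one inner product $a\cdot c$ from the Lemma and discard the (automatically available) information about the shifts, so cyclicity is not needed there. Thus the equivalence in the Corollary is precisely the manifestation of the cyclic structure at the level of the polynomial ring $\frac{\mathbb{Z}_2[x]}{\langle x^n-1\rangle}$.
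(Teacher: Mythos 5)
Your proof is correct and follows exactly the route the paper intends: the paper states this corollary without proof, as an immediate consequence of the preceding Lemma, and your two-inclusion argument (using cyclicity of $\mathcal{C}$ to upgrade orthogonality to orthogonality-against-all-shifts in the forward direction, and specializing the Lemma back to a single inner product in the reverse direction) is precisely the spelled-out version of that deduction.
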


{\bf From now on we assume that ${\bf m=lcm(r,s,t)}$.}
\begin{remark}\label{rem1}
Regarding Proposition 4.2 of \cite{BFT}  we have that 
$$x^m-1=\theta_\frac{m}{r}(x^r)(x^r-1)=\theta_\frac{m}{s}(x^s)(x^s-1)=\theta_\frac{m}{t}(x^t)(x^t-1).$$
\end{remark}
\begin{definition}
Let $u(x)=(u_1(x)\mid u_2(x)\mid u_3(x))$ and $v(x)=(v_1(x)\mid v_2(x)\mid v_3(x))$ be two elements of $\mathcal{R}_{r,s,t}$.
We define the map $$\circ:\mathcal{R}_{r,s,t}\times\mathcal{R}_{r,s,t}\to\frac{\mathbb{Z}_2[x]}{\langle x^m-1\rangle}$$
with 
\begin{eqnarray*}
\circ(u(x),v(x))&=&u_1(x)\theta_{\frac{m}{r}}(x^r)x^{m-1-deg(v_1(x))}v_1^*(x)+u_2(x)\theta_{\frac{m}{s}}(x^s)x^{m-1-deg(v_2(x))}v_2^*(x)\\
&+&u_3(x)\theta_{\frac{m}{t}}(x^t)x^{m-1-deg(v_3(x))}v_3^*(x)\hspace{.5cm} \mbox{mod} ~(x^m-1).
\end{eqnarray*}
The map $\circ$ is a bilinear map between $\mathbb{Z}_2[x]$-modules.
\end{definition}

\begin{proposition}\label{P6}
Let ${u},{v}$ be two elements of $\mathbb{Z}_2^r\times\mathbb{Z}_2^s\times\mathbb{Z}_2^t$. Then
$$u(x)\circ v(x)=0 ~\mbox{~mod~} (x^m-1)$$ if and only if ${u}$ is orthogonal to ${v}$ and all its shifts.
\end{proposition}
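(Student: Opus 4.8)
The plan is to compute $u(x)\circ v(x)$ coefficient by coefficient and to recognise each coefficient as one of the inner products $\langle u,\mathcal{T}^{-j}v\rangle$, where $\mathcal{T}$ is the simultaneous shift of the Definition. Since a residue in $\mathbb{Z}_2[x]/\langle x^m-1\rangle$ vanishes exactly when all $m$ of its coefficients vanish, and since $\mathcal{T}$ acts on the three blocks as the cyclic shifts of periods $r,s,t$ (so that $\mathcal{T}$ has period $m=\mathrm{lcm}(r,s,t)$ and ``$v$ and all its shifts'' means $\{\mathcal{T}^{j}v:0\le j\le m-1\}$), the proposition will follow once I show that, for every $j$, the coefficient of $x^{m-1-j}$ in $u(x)\circ v(x)$ equals $\langle u,\mathcal{T}^{-j}v\rangle$. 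The standard inner product on $\mathbb{Z}_2^{r+s+t}$ splits as the sum of the three block inner products, and $\langle u,\mathcal{T}^{-j}v\rangle$ splits accordingly, so it suffices to treat each of the three summands of $\circ$ separately and match it with the corresponding block.

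First I would read off the meaning of the two polynomial operations appearing in each summand. Fixing the first block (length $r$), the factor $x^{m-1-\deg v_1(x)}v_1^*(x)$ is exactly the coordinate reversal of $v_1$ placed in the top $r$ positions of a length-$m$ frame; that is, it equals $x^{m-r}v_1^{R}(x)$, where $v_1^{R}$ denotes the length-$r$ reversal of $v_1$. The factor $\theta_{m/r}(x^r)=(x^m-1)/(x^r-1)$, available from Remark~\ref{rem1}, then periodises this pattern: modulo $x^m-1$ it repeats a length-$r$ block $m/r$ times around the length-$m$ cycle. These two observations are what convert a length-$r$ computation into a length-$m$ one.

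The heart of the argument, and the step requiring the most care, is the coefficient extraction. I would expand
\begin{equation*}
u_1(x)\,\theta_{m/r}(x^r)\,x^{m-1-\deg v_1(x)}v_1^*(x)\pmod{x^m-1}
\end{equation*}
as a triple sum over the coefficients of $u_1$, of $v_1^{R}$, and over the powers $x^{\ell r}$ coming from $\theta_{m/r}(x^r)$, and extract the coefficient of $x^{m-1-j}$. The key point is that the exponents $\ell r$ for $0\le \ell<m/r$ are pairwise distinct modulo $m$ and run through all multiples of $r$ in $[0,m)$, so that for each pair of indices exactly one value of $\ell$ contributes to a given target exponent; after this collapse the surviving sum is $\sum_{q=0}^{r-1}u_{1,q}\,v_{1,(q+j)\bmod r}=\langle u_1,\sigma^{-j}v_1\rangle$, where $\sigma$ is the cyclic shift of length $r$. (Equivalently, one may obtain this identity from the preceding Lemma on reciprocals and cyclic shifts (Lemma 4.4.8 of \cite{HP}) applied at length $r$, combined with the periodisation property of $\theta_{m/r}(x^r)$.) The main obstacle is precisely this index bookkeeping: tracking the reduction modulo $x^m-1$, the direction of the shift, and the fact that periodisation contributes each residue exactly once.

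Carrying out the identical computation for the second and third blocks and summing, the coefficient of $x^{m-1-j}$ in $u(x)\circ v(x)$ becomes $\langle u,\mathcal{T}^{-j}v\rangle$. Consequently $u(x)\circ v(x)=0$ in $\mathbb{Z}_2[x]/\langle x^m-1\rangle$ if and only if $\langle u,\mathcal{T}^{-j}v\rangle=0$ for every $0\le j\le m-1$; as $-j$ ranges over all residues modulo $m$, this says exactly that $u$ is orthogonal to $v$ and to each of its shifts, which is the claim. (The degenerate cases in which some $v_i=0$ make the corresponding summand vanish and are handled trivially.)
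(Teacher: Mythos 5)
Your proof is correct and takes essentially the same approach as the paper: expand $u(x)\circ v(x)$, recognise the coefficient of $x^{m-1-j}$ as the inner product of $u$ with a cyclic shift of $v$ (using that $\theta_{m/r}(x^r)$ periodises each length-$r$ correlation pattern, each residue being hit by exactly one multiple of $r$), and conclude that the product vanishes modulo $x^m-1$ if and only if all of these inner products vanish. In fact your index bookkeeping --- the coefficient of $x^{m-1-j}$ equals $u\cdot\mathcal{T}^{-j}(v)=\sum_q u_{1,q}v_{1,(q+j)\bmod r}+\cdots$ --- is the accurate version of the paper's displayed identity $u(x)\circ v(x)=\sum_{i=0}^{m-1}S_ix^{m-1-i}$, whose shift index carries a harmless sign slip (the coefficient there should be $S_{-i\bmod m}$ rather than $S_i$); since the set of all cyclic shifts of $v$ is invariant under $i\mapsto -i$, both computations yield the same equivalence.
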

\begin{proof}
Consider the following representations for ${u},{v}$:
\begin{eqnarray*}
{u}&=&(u_{1,0},u_{1,1},\dots,u_{1,r-1}\mid u_{2,0},u_{2,1},\dots,u_{2,s-1}\mid u_{3,0},u_{3,1},\dots,u_{3,t-1}),\\
{v}&=&(v_{1,0},v_{1,1},\dots,v_{1,r-1}\mid v_{2,0},v_{2,1},\dots,v_{2,s-1}\mid v_{3,0},v_{3,1},\dots,v_{3,t-1}).
\end{eqnarray*}
Assume that 
$${v}^{(i)}=(v_{1,0-i},v_{1,1-i},\dots,v_{1,r-1-i}\mid v_{2,0-i},v_{2,1-i},\dots,v_{2,s-1-i}\mid v_{3,0-i},v_{3,1-i},\dots,v_{3,t-1-i})$$
is the $i$-th cyclic shift of ${v}$, where $0\leq i\leq m-1$.
Notice that ${u}\cdot {v}^{(i)}=0$ if and only if 
$$\sum\limits_{j=0}^{r-1}u_{1,j}v_{1,j-i}+\sum\limits_{k=0}^{s-1}u_{2,k}v_{2,k-i}+\sum\limits_{l=0}^{t-1}u_{3,l}v_{3,l-i}=0.$$
Set $S_i:=\sum\limits_{j=0}^{r-1}u_{1,j}v_{1,j-i}+\sum\limits_{k=0}^{s-1}u_{2,k}v_{2,k-i}+\sum\limits_{l=0}^{t-1}u_{3,l}v_{3,l-i}$.
Therefore 
\begin{eqnarray*}
u(x)\circ v(x)&=&\theta_{\frac{m}{r}}(x^r)\sum\limits_{h=0}^{r-1}\sum\limits_{j=0}^{r-1}u_{1,j}v_{1,j-h}x^{m-1-h}+\theta_{\frac{m}{s}}(x^s)\sum\limits_{p=0}^{s-1}\sum\limits_{k=0}^{s-1}u_{2,k}v_{2,k-p}x^{m-1-p}\\&+&\theta_{\frac{m}{t}}(x^t)\sum\limits_{q=0}^{t-1}\sum\limits_{l=0}^{t-1}u_{3,l}v_{3,l-q}x^{m-1-q}=\sum\limits_{i=0}^{m-1}S_ix^{m-1-i}
\hspace{1cm}\mbox{mod}~~(x^m-1).
\end{eqnarray*}
Consequently $u(x)\circ v(x)=0 ~\mbox{~mod~} (x^m-1)$ if and only if $S_i=0$ for every $0\leq i\leq m-1$.
\end{proof}

\begin{proposition}\label{P5}
Let $u(x)=(u_1(x)\mid u_2(x)\mid u_3(x))$ and $v(x)=(v_1(x)\mid v_2(x)\mid v_3(x))$ be two elements of $\mathcal{R}_{r,s,t}$ such that
$u_2(x)=0$ or $v_2(x)=0$, and $u_3(x)=0$ or $v_3(x)=0$. Then $u(x)\circ v(x)=0$~ mod $(x^m-1)$ if and only if $u_1(x)v_1^*(x)=0$~ mod $(x^r-1)$.
\end{proposition}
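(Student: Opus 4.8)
The plan is to use the two hypotheses to collapse the three-term sum defining $\circ$ down to its first summand, and then to reduce the resulting divisibility statement modulo $x^m-1$ to one modulo $x^r-1$ by means of the factorization recorded in Remark \ref{rem1}.

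First I would observe that the assumption ``$u_2(x)=0$ or $v_2(x)=0$'' forces the second term of $u(x)\circ v(x)$ to vanish: if $u_2(x)=0$ this is immediate, while if $v_2(x)=0$ then $v_2^*(x)=0$ by the definition of the reciprocal polynomial. The identical argument applied to ``$u_3(x)=0$ or $v_3(x)=0$'' kills the third term. Hence
$$u(x)\circ v(x)=u_1(x)\,\theta_{\frac{m}{r}}(x^r)\,x^{m-1-\deg(v_1(x))}\,v_1^*(x)\quad\bmod~(x^m-1).$$

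Next I would dispose of the power of $x$. Since $x\cdot x^{m-1}\equiv 1\pmod{x^m-1}$, the element $x$ is a unit in $\mathbb{Z}_2[x]/\langle x^m-1\rangle$, so multiplication by $x^{m-1-\deg(v_1(x))}$ is an automorphism of that ring and cannot change whether a class is zero. Therefore $u(x)\circ v(x)=0\bmod(x^m-1)$ if and only if $u_1(x)\,\theta_{\frac{m}{r}}(x^r)\,v_1^*(x)\equiv 0\pmod{x^m-1}$, i.e.\ if and only if $x^m-1$ divides $u_1(x)\,\theta_{\frac{m}{r}}(x^r)\,v_1^*(x)$ in $\mathbb{Z}_2[x]$.

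Finally I would invoke Remark \ref{rem1} to write $x^m-1=\theta_{\frac{m}{r}}(x^r)(x^r-1)$ and cancel the common factor: since $\mathbb{Z}_2[x]$ is an integral domain and $\theta_{\frac{m}{r}}(x^r)\neq 0$, the divisibility $\theta_{\frac{m}{r}}(x^r)(x^r-1)\mid u_1(x)\,\theta_{\frac{m}{r}}(x^r)\,v_1^*(x)$ holds exactly when $(x^r-1)\mid u_1(x)v_1^*(x)$, which is the assertion $u_1(x)v_1^*(x)=0\bmod(x^r-1)$. This delivers both implications simultaneously. The only points demanding care are the verification that the stray factor $x^{m-1-\deg(v_1(x))}$ really is invertible modulo $x^m-1$ (so that it may be discarded) and that the cancellation of $\theta_{\frac{m}{r}}(x^r)$ is performed at the level of the polynomial ring before passing to the quotient; everything else is a direct substitution, with the degenerate case $v_1(x)=0$ being trivial on both sides.
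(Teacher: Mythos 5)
Your proof is correct, and it is actually more self-contained than the paper's. For the ``if'' direction your argument coincides with the paper's: both of you write $u_1(x)v_1^*(x)=\lambda(x)(x^r-1)$ and absorb $\theta_{\frac{m}{r}}(x^r)(x^r-1)$ into $x^m-1$ using the factorization of Remark \ref{rem1}. The difference is in the ``only if'' direction, which the paper does not prove at all --- it simply says it is ``similar to that of \cite[Lemma 4.5]{BFT}'' and defers to the double cyclic code paper. You instead handle both implications at once via two observations the paper never makes explicit: first, that $x$ is invertible modulo $x^m-1$ (since $x\cdot x^{m-1}\equiv 1$), so the stray factor $x^{m-1-\deg(v_1(x))}$ may be discarded without affecting vanishing; second, that $\theta_{\frac{m}{r}}(x^r)$ is a nonzero element of the integral domain $\mathbb{Z}_2[x]$ and hence can be cancelled from the divisibility $\theta_{\frac{m}{r}}(x^r)(x^r-1)\mid u_1(x)\theta_{\frac{m}{r}}(x^r)v_1^*(x)$, performed at the level of polynomials before passing to the quotient. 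This buys a complete, citation-free proof of the equivalence, at the cost of a couple of extra lines; the paper's version is shorter only because half of it is outsourced. One terminological nitpick: multiplication by a unit is a $\mathbb{Z}_2[x]$-module automorphism (a bijection) of $\mathbb{Z}_2[x]/\langle x^m-1\rangle$, not a ring automorphism, but this does not affect the argument, and your handling of the degenerate case $v_1(x)=0$ is appropriate.
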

\begin{proof}
($\Rightarrow$) Similar to that of \cite[Lemma 4.5]{BFT}.\\
($\Leftarrow$) Suppose that $u_1(x)v_1^*(x)=0$~ mod $(x^r-1)$. Then, there exists $\lambda(x)\in\mathbb{Z}_2[x]$ such that
$u_1(x)v_1^*(x)=\lambda(x)(x^r-1)$, and so 
$$u(x)\circ v(x)=u_1(x)\theta_{\frac{m}{r}}(x^r)x^{m-1-deg(v_1(x))}v_1^*(x)=x^{m-1-deg(v_1(x))}\lambda(x)\theta_{\frac{m}{r}}(x^r)(x^r-1).$$
Therefore, by Remark \ref{rem1} we have that 
$$u(x)\circ v(x)=x^{m-1-deg(v_1(x))}\lambda(x)(x^m-1),$$
which is $0$ mod $(x^m-1)$.
\end{proof}

Similar to Proposition \ref{P5} we can state the next two propositions.
\begin{proposition}
Let $u(x)=(u_1(x)\mid u_2(x)\mid u_3(x))$ and $v(x)=(v_1(x)\mid v_2(x)\mid v_3(x))$ be two elements of $\mathcal{R}_{r,s,t}$ such that
$u_1(x)=0$ or $v_1(x)=0$, and $u_3(x)=0$ or $v_3(x)=0$. Then $u(x)\circ v(x)=0$~ mod $(x^m-1)$ if and only if $u_2(x)v_2^*(x)=0$~ mod $(x^s-1)$.
\end{proposition}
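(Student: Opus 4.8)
The plan is to transcribe the proof of Proposition~\ref{P5}, interchanging the roles of the first and second coordinates. The decisive first step is to observe that the two hypotheses annihilate two of the three summands defining $\circ$: the condition that $u_1(x)=0$ or $v_1(x)=0$ (whence $v_1^*(x)=0$) kills the first term, and the condition that $u_3(x)=0$ or $v_3(x)=0$ kills the third term. Hence in $\frac{\mathbb{Z}_2[x]}{\langle x^m-1\rangle}$ one is left with the single term
$$u(x)\circ v(x)=u_2(x)\theta_{\frac{m}{s}}(x^s)x^{m-1-deg(v_2(x))}v_2^*(x)\pmod{x^m-1},$$
so everything reduces to understanding when this term vanishes.

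For the direction ($\Leftarrow$) I would argue exactly as in Proposition~\ref{P5}. Writing $u_2(x)v_2^*(x)=\lambda(x)(x^s-1)$ for some $\lambda(x)\in\mathbb{Z}_2[x]$ and substituting into the displayed expression, Remark~\ref{rem1} supplies the identity $\theta_{\frac{m}{s}}(x^s)(x^s-1)=x^m-1$. The product then becomes $x^{m-1-deg(v_2(x))}\lambda(x)(x^m-1)$, which is plainly $0$ modulo $x^m-1$.

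For the direction ($\Rightarrow$) I would proceed along the lines of \cite[Lemma~4.5]{BFT}, the essential mechanism being the following. Since $x\cdot x^{m-1}=x^m\equiv 1$, the element $x$ is a unit modulo $x^m-1$, so the factor $x^{m-1-deg(v_2(x))}$ may be discarded and the hypothesis becomes $u_2(x)\theta_{\frac{m}{s}}(x^s)v_2^*(x)\equiv 0\pmod{x^m-1}$, i.e.\ $x^m-1$ divides $u_2(x)\theta_{\frac{m}{s}}(x^s)v_2^*(x)$ in $\mathbb{Z}_2[x]$. Using Remark~\ref{rem1} to write $x^m-1=\theta_{\frac{m}{s}}(x^s)(x^s-1)$ and then cancelling the nonzero common factor $\theta_{\frac{m}{s}}(x^s)$ in the integral domain $\mathbb{Z}_2[x]$ yields $x^s-1\mid u_2(x)v_2^*(x)$, which is the desired congruence.

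I do not anticipate a genuine obstacle, since the whole argument is a mechanical copy of Proposition~\ref{P5} under the substitution $r\leftrightarrow s$ together with the replacement of $\theta_{\frac{m}{r}}(x^r)$ by $\theta_{\frac{m}{s}}(x^s)$. The only place demanding a little care is the forward direction, where the cancellation of $\theta_{\frac{m}{s}}(x^s)$ is legitimate precisely because $\mathbb{Z}_2[x]$ has no zero divisors; one should also record that removing the power of $x$ relies on its invertibility modulo $x^m-1$.
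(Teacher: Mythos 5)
Your proposal is correct and takes essentially the same route as the paper: the paper proves this proposition by declaring it ``similar to Proposition~\ref{P5}'', whose ($\Leftarrow$) direction is exactly your $\lambda(x)$ argument via Remark~\ref{rem1}, and whose ($\Rightarrow$) direction it defers to \cite[Lemma~4.5]{BFT}. You have merely written out that deferred step explicitly (invertibility of $x$ modulo $x^m-1$, then cancellation of $\theta_{\frac{m}{s}}(x^s)$ in the integral domain $\mathbb{Z}_2[x]$), which is the natural mechanism and is sound.
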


\begin{proposition}
Let $u(x)=(u_1(x)\mid u_2(x)\mid u_3(x))$ and $v(x)=(v_1(x)\mid v_2(x)\mid v_3(x))$ be two elements of $\mathcal{R}_{r,s,t}$ such that
$u_1(x)=0$ or $v_1(x)=0$, and $u_2(x)=0$ or $v_2(x)=0$. Then $u(x)\circ v(x)=0$~ mod $(x^m-1)$ if and only if $u_3(x)v_3^*(x)=0$~ mod $(x^t-1)$.
\end{proposition}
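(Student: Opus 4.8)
The plan is to mirror the proof of Proposition \ref{P5}, with the roles of the first and third coordinates interchanged. The first thing I would do is observe that the hypotheses collapse the defining expression for $u(x)\circ v(x)$ to a single term. Indeed, if $u_1(x)=0$ or $v_1(x)=0$ then $u_1(x)\,v_1^*(x)=0$ (recall $0^*=0$), so the first summand vanishes; likewise the condition $u_2(x)=0$ or $v_2(x)=0$ kills the second summand. Hence
$$u(x)\circ v(x)=u_3(x)\,\theta_{\frac{m}{t}}(x^t)\,x^{m-1-\deg(v_3(x))}\,v_3^*(x)\quad\text{mod }(x^m-1),$$
and the whole statement reduces to analysing this lone product.

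For the implication ($\Leftarrow$) I would argue exactly as in Proposition \ref{P5}. Assuming $u_3(x)v_3^*(x)=0$ mod $(x^t-1)$, write $u_3(x)v_3^*(x)=\lambda(x)(x^t-1)$ for some $\lambda(x)\in\mathbb{Z}_2[x]$. Substituting into the surviving term and invoking the factorisation $\theta_{\frac{m}{t}}(x^t)(x^t-1)=x^m-1$ from Remark \ref{rem1} gives
$$u(x)\circ v(x)=x^{m-1-\deg(v_3(x))}\,\lambda(x)\,\theta_{\frac{m}{t}}(x^t)(x^t-1)=x^{m-1-\deg(v_3(x))}\,\lambda(x)\,(x^m-1),$$
which is manifestly $0$ mod $(x^m-1)$.

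The forward implication ($\Rightarrow$) is the only part needing a genuine idea, and is the step I expect to be the main obstacle. Starting from $u(x)\circ v(x)=0$ mod $(x^m-1)$, I would first remove the power of $x$: since $\gcd(x,x^m-1)=1$, the element $x$ is a unit in $\mathbb{Z}_2[x]/\langle x^m-1\rangle$ with inverse $x^{m-1}$, so multiplying by $x^{\,1+\deg(v_3(x))}$ clears the factor $x^{m-1-\deg(v_3(x))}$ and yields $u_3(x)\,\theta_{\frac{m}{t}}(x^t)\,v_3^*(x)\equiv0$ mod $(x^m-1)$. Reading this as a divisibility in $\mathbb{Z}_2[x]$ and again using $x^m-1=\theta_{\frac{m}{t}}(x^t)(x^t-1)$, I obtain
$$\theta_{\frac{m}{t}}(x^t)(x^t-1)\mid\theta_{\frac{m}{t}}(x^t)\,u_3(x)\,v_3^*(x).$$
Because $\mathbb{Z}_2[x]$ is an integral domain and $\theta_{\frac{m}{t}}(x^t)\neq0$, I may cancel the common factor $\theta_{\frac{m}{t}}(x^t)$ to deduce $(x^t-1)\mid u_3(x)v_3^*(x)$, that is, $u_3(x)v_3^*(x)=0$ mod $(x^t-1)$. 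The care required is precisely in justifying this cancellation, which is why the argument must be carried out at the level of genuine polynomials in $\mathbb{Z}_2[x]$ rather than inside the quotient ring; alternatively one may simply cite the forward direction of \cite[Lemma 4.5]{BFT}, as was done for Proposition \ref{P5}.
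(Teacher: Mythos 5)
Your proposal is correct and follows essentially the same route as the paper, which simply declares this proposition ``similar to Proposition \ref{P5}'': your reduction to the single surviving summand and your ($\Leftarrow$) direction via the factorization $x^m-1=\theta_{\frac{m}{t}}(x^t)(x^t-1)$ of Remark \ref{rem1} are exactly the paper's argument transposed to the third coordinate. The only difference is that for ($\Rightarrow$) the paper defers to the citation of \cite[Lemma 4.5]{BFT}, whereas you spell the step out (using that $x$ is a unit modulo $x^m-1$ and cancelling the nonzero factor $\theta_{\frac{m}{t}}(x^t)$ in the integral domain $\mathbb{Z}_2[x]$), which is a valid, self-contained filling-in of that reference.
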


\begin{proposition}\label{P2}
Let $\mathcal{C}=\big\langle(F_1(x)\mid0\mid0),(0\mid F_2(x)\mid0),(G_1(x)\mid G_2(x)\mid G_3(x))\big\rangle$ be a triple cyclic code of lenght $(r,s,t)$ over $\mathbb{Z}_2$. Then 
\begin{enumerate}
\item $F_1(x)\mid\frac{x^t-1}{G_3(x)}G_1(x)$ and $F_2(x)\mid\frac{x^t-1}{G_3(x)}G_2(x)$.
\item $F_1(x)F_2(x)\mid\frac{x^t-1}{G_3(x)}gcd(F_1(x)F_2(x),F_1(x)G_2(x),F_2(x)G_1(x))$.
\item $\mathcal{C}_r=\big\langle gcd(F_1(x),G_1(x))\big\rangle$, $\mathcal{C}_s=\big\langle gcd(F_2(x),G_2(x))\big\rangle$ and $\mathcal{C}_t=\big\langle G_3(x)\big\rangle$.
\item $(\mathcal{C}_r)^{\bot}=\big\langle\frac{x^r-1}{gcd(F_1^*(x),G_1^*(x))}\big\rangle$, 
$(\mathcal{C}_s)^{\bot}=\big\langle\frac{x^s-1}{gcd(F_2^*(x),G_2^*(x))}\big\rangle$ and $(\mathcal{C}_t)^{\bot}=\big\langle \frac{x^t-1}{G_3^*(x)}\big\rangle$.
\end{enumerate}
\end{proposition}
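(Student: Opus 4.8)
The plan is to establish the four parts in order, deriving (2) quickly from (1) and reducing (3) and (4) to standard facts about cyclic codes that are already available in the excerpt. The engine for (1) is the familiar ``multiply to annihilate a coordinate'' trick. Since $G_3(x)\mid x^t-1$, the polynomial $P(x):=\frac{x^t-1}{G_3(x)}$ lies in $\mathbb{Z}_2[x]$ and satisfies $P(x)G_3(x)=x^t-1\equiv 0\pmod{x^t-1}$. Hence $P(x)*(G_1\mid G_2\mid G_3)=(PG_1\mid PG_2\mid 0)$ is a codeword of $\mathcal{C}$ whose third component vanishes, so it lies in $\mathrm{Ker}(\Phi)=\langle(F_1\mid0\mid0),(0\mid F_2\mid0)\rangle$. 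Because the associated ideal $\mathcal{I}=\langle F_1\rangle\times\langle F_2\rangle$ is separable (as shown in the proof of Theorem \ref{T1}), comparing coordinates gives $PG_1\in\langle F_1\rangle$ and $PG_2\in\langle F_2\rangle$ in the respective quotient rings, i.e. $F_1\mid PG_1\pmod{x^r-1}$ and $F_2\mid PG_2\pmod{x^s-1}$. Since $F_1\mid x^r-1$ and $F_2\mid x^s-1$, these congruential divisibilities lift to genuine polynomial divisibilities, which is exactly assertion (1).

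For part (2), I would keep the notation $P(x)=\frac{x^t-1}{G_3(x)}$ and set $d(x)=\gcd\big(F_1F_2,\,F_1G_2,\,F_2G_1\big)$. Using the identity $\gcd(Pa,Pb,Pc)=P\gcd(a,b,c)$ in the unique factorization domain $\mathbb{Z}_2[x]$, one has $P d=\gcd\big(PF_1F_2,\,PF_1G_2,\,PF_2G_1\big)$, so the desired relation $F_1F_2\mid Pd$ is equivalent to $F_1F_2$ dividing each of the three arguments. The first divisibility $F_1F_2\mid PF_1F_2$ is trivial; cancelling $F_1$ reduces the second to $F_2\mid PG_2$, and cancelling $F_2$ reduces the third to $F_1\mid PG_1$, both of which are precisely part (1). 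Thus (2) collapses onto (1).

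For part (3), I would use that the coordinate projection $\mathcal{C}\to\frac{\mathbb{Z}_2[x]}{\langle x^r-1\rangle}$ is a surjective $\mathbb{Z}_2[x]$-module homomorphism onto the ideal $\mathcal{C}_r$ carrying the three generators to $F_1,0,G_1$; hence $\mathcal{C}_r=\langle F_1,G_1\rangle$. As $\frac{\mathbb{Z}_2[x]}{\langle x^r-1\rangle}$ is a principal ideal ring, $\langle F_1,G_1\rangle=\langle\gcd(F_1,G_1,x^r-1)\rangle$, and since $F_1\mid x^r-1$ this gcd is $\gcd(F_1,G_1)$. The identical argument yields $\mathcal{C}_s=\langle\gcd(F_2,G_2)\rangle$, while $\mathcal{C}_t=\mathrm{Im}(\Phi)=\langle G_3\rangle$ is immediate from Theorem \ref{T1}. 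For part (4), I would invoke the Corollary on duals of cyclic codes: writing $g=\gcd(F_1,G_1)$ so that $\mathcal{C}_r=\langle g\rangle$, a vector $a$ lies in $(\mathcal{C}_r)^{\bot}$ iff $a(x)g^*(x)\equiv0\pmod{x^r-1}$ (it suffices to test the generator $g$ and its shifts, which span $\mathcal{C}_r$). Since $g\mid x^r-1$ gives $g^*\mid(x^r-1)^*=x^r-1$, cancelling $g^*$ shows this is equivalent to $\frac{x^r-1}{g^*}\mid a$, so $(\mathcal{C}_r)^{\bot}=\big\langle\frac{x^r-1}{g^*}\big\rangle$; part (6) of the reciprocal-polynomial proposition then rewrites $g^*=\gcd(F_1,G_1)^*=\gcd(F_1^*,G_1^*)$, yielding the stated generator, and the computations for $(\mathcal{C}_s)^{\bot}$ and $(\mathcal{C}_t)^{\bot}$ are verbatim.

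I expect part (2) to be the only genuinely non-routine step, precisely because its statement looks harder than it is: the combinatorics of the three-term gcd is what obscures the fact that, after pulling out $P(x)$ and cancelling the obvious factors, every required divisibility is already contained in part (1). The remaining ingredients---the coordinate-annihilation trick, the principal-ideal identification of $\mathcal{C}_r,\mathcal{C}_s,\mathcal{C}_t$, and the dual-of-cyclic-code formula---are standard and are supplied directly by the earlier results in the excerpt.
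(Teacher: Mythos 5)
Your proposal is correct and takes essentially the same route as the paper: part (1) via multiplying $(G_1\mid G_2\mid G_3)$ by $\frac{x^t-1}{G_3(x)}$ and using $\mathrm{Ker}(\Phi)=\big\langle(F_1\mid 0\mid 0),(0\mid F_2\mid 0)\big\rangle$ from Theorem~\ref{T1}, part (2) reduced to part (1), part (3) identifying $\mathcal{C}_r$ with the ideal $\langle F_1,G_1\rangle=\langle \gcd(F_1,G_1)\rangle$, and part (4) by the standard dual-of-a-cyclic-code formula. The only differences are presentational: you spell out the gcd manipulations behind the paper's ``By part (1)'' in (2), replace the paper's explicit B\'ezout two-inclusion argument in (3) by the equivalent PID/ideal-correspondence argument, and derive (4) from the orthogonality lemma rather than citing Theorem 4.2.7 of \cite{HP}.
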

\begin{proof}
(1) Consider the projection homomorphism of $\mathbb{Z}_2[x]$-modules
$$\Phi:\mathcal{C}\to\frac{\mathbb{Z}_2[x]}{\langle x^t-1\rangle}$$
$$\hspace{.9mm}(c_1(x)\mid c_2(x)\mid c_3(x))\mapsto c_3(x).$$\\
In view of the proof of Theorem \ref{T1}, ${\rm Ker}(\Phi)=\langle(F_1(x)\mid0\mid0),(0\mid F_2(x)\mid0)\rangle$. On the other hand, we have that
{\small
$$\frac{x^t-1}{G_3(x)}*(G_1(x)\mid G_2(x)\mid G_3(x))
=(\frac{x^t-1}{G_3(x)}G_1(x)\mid\frac{x^t-1}{G_3(x)}G_2(x)\mid0)$$
$$\hspace{7cm}\subseteq{\rm Ker}(\pi)=\langle(F_1(x)\mid0\mid0),(0\mid F_2(x)\mid0)\rangle.$$}
Consequently $F_1(x)\mid\frac{x^t-1}{G_3(x)}G_1(x)$ and $F_2(x)\mid\frac{x^t-1}{G_3(x)}G_2(x)$.\\
(2) By part (1).\\
(3) We show that $\mathcal{C}_r=\big\langle gcd(F_1(x),G_1(x))\big\rangle$.
Let $u(x)\in\mathcal{C}_r$. Then there exist
$v(x)\in\frac{\mathbb{Z}_2[x]}{\langle x^s-1\rangle}$ and $w(x)\in\frac{\mathbb{Z}_2[x]}{\langle x^t-1\rangle}$
such that
 $(u(x)\mid v(x)\mid w(x))\in\mathcal{C}$. Thus there are
$\lambda(x),\mu(x),\nu(x)\in\mathbb{Z}_2[x]$ such that 
{\small
$$(u(x)\mid v(x)\mid w(x))=\lambda(x)(F_1(x)\mid0\mid0)+\mu(x)(0\mid F_2(x)\mid0)+\nu(x)(G_1(x),G_2(x),G_3(x)).$$}
Hence $u(x)=\lambda(x)F_1(x)+\nu(x)G_1(x)$. Then
$gcd(F_1(x),G_1(x))$ divides $u(x)$. So $u(x)\in\big\langle gcd(F_1(x),G_1(x))\big\rangle$. Thus
$\mathcal{C}_r\subseteq\big\langle gcd(F_1(x),G_1(x))\big\rangle$. On the other hand there exist 
$\eta(x),\gamma(x)\in\mathbb{Z}_2[x]$ such that $gcd(F_1(x),G_1(x))=\eta(x)F_1(x)+\gamma(x)G_1(x)$.
Then 
\begin{eqnarray*}
\big(gcd(F_1(x),G_1(x))\mid \gamma G_2(x)\mid\gamma G_3(x)\big)&=&\eta(x)(F_1(x)\mid0\mid0)\\
&&+\gamma(x)(G_1(x)\mid G_2(x)\mid G_3(x))\in\mathcal{C}.
\end{eqnarray*}
Therefore $gcd(F_1(x),G_1(x))\in\mathcal{C}_r$, which shows that $\mathcal{C}_r=\big\langle gcd(F_1(x),G_1(x))\big\rangle$.\\
(4) By part (3) and \cite[Theorem 4.2.7]{HP}.
\end{proof}

As a direct consequence of parts (3),(4) of Proposition \ref{P2} we have the following result.
\begin{corollary}
Let $\mathcal{C}=\big\langle(F_1(x)\mid0\mid0),(0\mid F_2(x)\mid0),(G_1(x)\mid G_2(x)\mid G_3(x))\big\rangle$ be a triple cyclic code of lenght $(r,s,t)$ over $\mathbb{Z}_2$. Then
$$|\mathcal{C}_r|=2^{r-\deg(\gcd(F_1(x),G_1(x)))}, ~~|\mathcal{C}_s|=2^{s-\deg(\gcd(F_2(x),G_2(x)))},~~|\mathcal{C}_t|=2^{t-\deg(G_3(x))},$$
$$|(\mathcal{C}_r)^\perp|=2^{\deg(\gcd(F_1(x),G_1(x)))},~~ |(\mathcal{C}_s)^\perp|=2^{\deg(\gcd(F_2(x),G_2(x)))},~~|(\mathcal{C}_t)^\perp|=2^{\deg(G_3(x))}.$$
\end{corollary}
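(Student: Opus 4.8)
The plan is to read off both rows of equalities directly from the structural description of $\mathcal{C}_r,\mathcal{C}_s,\mathcal{C}_t$ and their duals supplied by parts (3) and (4) of Proposition \ref{P2}, using only the elementary fact that a binary cyclic code of length $n$ whose generator polynomial $g(x)$ divides $x^n-1$ has exactly $2^{n-\deg(g(x))}$ codewords (equivalently, its dimension as a $\mathbb{Z}_2$-vector space is $n-\deg(g(x))$). Thus the whole statement is a packaging of those two parts through one standard counting lemma.

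First I would treat the three component codes. By part (3) we have $\mathcal{C}_r=\langle\gcd(F_1(x),G_1(x))\rangle$; since $F_1(x)\mid x^r-1$ and $\gcd(F_1(x),G_1(x))\mid F_1(x)$, the generator divides $x^r-1$, so the dimension formula applies and gives $|\mathcal{C}_r|=2^{r-\deg(\gcd(F_1(x),G_1(x)))}$. The same argument with $F_2(x)\mid x^s-1$ yields $|\mathcal{C}_s|=2^{s-\deg(\gcd(F_2(x),G_2(x)))}$, and $\mathcal{C}_t=\langle G_3(x)\rangle$ with $G_3(x)\mid x^t-1$ gives $|\mathcal{C}_t|=2^{t-\deg(G_3(x))}$.

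Next I would handle the duals using part (4). For instance $(\mathcal{C}_r)^{\bot}=\langle\frac{x^r-1}{\gcd(F_1^*(x),G_1^*(x))}\rangle$, whose generator has degree $r-\deg(\gcd(F_1^*(x),G_1^*(x)))$; applying the dimension formula again gives $|(\mathcal{C}_r)^{\bot}|=2^{\deg(\gcd(F_1^*(x),G_1^*(x)))}$. The only extra ingredient needed is to rewrite this exponent: by parts (6) and (1) of the proposition on reciprocal polynomials we have $\gcd(F_1^*(x),G_1^*(x))=\gcd(F_1(x),G_1(x))^*$ and $\deg(h^*)=\deg(h)$, whence $\deg(\gcd(F_1^*(x),G_1^*(x)))=\deg(\gcd(F_1(x),G_1(x)))$, which is exactly the claimed exponent. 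The computations for $(\mathcal{C}_s)^{\bot}$ and $(\mathcal{C}_t)^{\bot}$ are identical, the last one using $G_3^*(x)$ in place of the gcd.

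Since everything reduces to the standard dimension count together with the bookkeeping identity $\deg(g^*)=\deg(g)$, I expect no genuine obstacle here; the only points requiring a moment's care are the verification that each generator actually divides the relevant $x^n-1$ so that the dimension formula is legitimately applicable, and the reciprocal-degree identity that aligns the dual exponents with the primal ones. As a consistency check one may also observe that $|\mathcal{C}_r|\,|(\mathcal{C}_r)^{\bot}|=2^r$, as it must be for a code and its dual in $\mathbb{Z}_2^r$, and likewise for the $s$ and $t$ components.
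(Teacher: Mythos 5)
Your proof is correct and takes essentially the same route as the paper, which presents this corollary as a direct consequence of parts (3) and (4) of Proposition \ref{P2}: you simply make explicit the standard dimension count $|\langle g(x)\rangle|=2^{n-\deg(g(x))}$ for $g(x)\mid x^n-1$, together with the identity $\deg(\gcd(F_1^*(x),G_1^*(x)))=\deg(\gcd(F_1(x),G_1(x)))$ from the proposition on reciprocal polynomials. No gap to report.
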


Let $S$ be a subset of $\mathcal{R}_{r,s,t}$. The $\mathbb{Z}_2$-submodule of 
$\mathcal{R}_{r,s,t}$ generated by $S$ is denoted by $\langle S\rangle_{\mathbb{Z}_2}$.
\begin{theorem}\label{T2}
Let $\mathcal{C}=\big\langle(F_1(x)\mid0\mid0),(0\mid F_2(x)\mid0),(G_1(x)\mid G_2(x)\mid G_3(x))\big\rangle$ be a triple cyclic code of lenght $(r,s,t)$ over $\mathbb{Z}_2$. Define the sets
\begin{eqnarray*}
S_1&=&\bigcup\limits_{i=0}^{r-deg(F_1(x))-1}\{x^{i}*(F_1(x)\mid0\mid0)\},\\
S_2&=&\bigcup\limits_{i=0}^{s-deg(F_2(x))-1}\{x^{i}*(0\mid F_2(x)\mid0)\},\\
S_3&=&\bigcup\limits_{i=0}^{t-deg(G_3(x))-1}\{x^{i}*(G_1(x)\mid G_2(x)\mid G_3(x))\}.
\end{eqnarray*}
Then the following conditions hold:
\begin{enumerate}
\item $\langle S_1\rangle_{\mathbb{Z}_2}=\big\langle(F_1(x)\mid0\mid0)\big\rangle$.
\item $\langle S_2\rangle_{\mathbb{Z}_2}=\big\langle(0\mid F_2(x)\mid0)\big\rangle$.
\item $\langle S_1\cup S_2\cup S_3\rangle_{\mathbb{Z}_2}\supseteq\big\langle(G_1(x)\mid G_2(x)\mid G_3(x))\big\rangle$.
\item $S_1\cup S_2\cup S_3$ forms a minimal generating set for $\mathcal{C}$ as a $\mathbb{Z}_2$-submodule of $\mathcal{R}_{r,s,t}$.
\item $\mid\mathcal{C}\mid=2^d$ where $d=r+s+t-deg(F_1(x))-deg(F_2(x))-deg(G_3(x))$.
\end{enumerate}
\end{theorem}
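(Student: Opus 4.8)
The plan is to reduce everything to one standard fact about cyclic codes: if $g(x)\mid x^n-1$ in $\mathbb{Z}_2[x]$, then the reduced multiples $\{x^i g(x)\bmod (x^n-1) : 0\le i\le n-\deg(g)-1\}$ form a $\mathbb{Z}_2$-basis of the cyclic code $\langle g(x)\rangle$ inside $\mathbb{Z}_2[x]/\langle x^n-1\rangle$. Granting this, parts (1) and (2) are immediate: applying it with $g=F_1$, $n=r$ shows that the $\mathbb{Z}_2$-span of $S_1$ is exactly the set of all $\mathbb{Z}_2$-combinations of $F_1,xF_1,\dots,x^{r-\deg(F_1)-1}F_1$ in the first coordinate, which is precisely the $\mathbb{Z}_2[x]$-module $\langle(F_1\mid 0\mid 0)\rangle$; the same argument with $g=F_2$, $n=s$ gives (2). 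I would state this underlying basis fact once and cite it, as it is just the generator-polynomial description of cyclic codes.

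For part (3) I would argue that every $\mathbb{Z}_2[x]$-multiple $p(x)*(G_1\mid G_2\mid G_3)$ lies in $\langle S_1\cup S_2\cup S_3\rangle_{\mathbb{Z}_2}$; by $\mathbb{Z}_2$-linearity it suffices to treat the monomial shifts $x^j*(G_1\mid G_2\mid G_3)$ for every $j\ge 0$. For $0\le j\le t-\deg(G_3)-1$ these are exactly the elements of $S_3$. For larger $j$, the key move is to reduce only the third coordinate: since $\{x^i G_3: 0\le i\le t-\deg(G_3)-1\}$ is a $\mathbb{Z}_2$-basis of $\langle G_3\rangle$, write $x^j G_3\equiv\sum_i b_i x^i G_3\pmod{x^t-1}$ and subtract $\sum_i b_i\,x^i*(G_1\mid G_2\mid G_3)$ from $x^j*(G_1\mid G_2\mid G_3)$. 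The resulting vector has zero third coordinate, hence lies in $\mathrm{Ker}(\Phi)$, which by the proof of Theorem \ref{T1} equals $\langle(F_1\mid0\mid0),(0\mid F_2\mid0)\rangle=\langle S_1\cup S_2\rangle_{\mathbb{Z}_2}$ (using parts (1),(2)). Thus $x^j*(G_1\mid G_2\mid G_3)$ splits as an element of $\langle S_3\rangle_{\mathbb{Z}_2}$ plus an element of $\langle S_1\cup S_2\rangle_{\mathbb{Z}_2}$, which is what we need.

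For part (4), generation follows by combining (1)--(3): $\langle S_1\cup S_2\cup S_3\rangle_{\mathbb{Z}_2}$ contains the three $\mathbb{Z}_2[x]$-generators of $\mathcal{C}$, hence contains $\mathcal{C}$, while the reverse inclusion is clear since each listed vector lies in $\mathcal{C}$. For minimality I would prove that the $d$ vectors in $S_1\cup S_2\cup S_3$ are $\mathbb{Z}_2$-linearly independent by reading off coordinates in sequence: in a vanishing combination, the third coordinate forces the $S_3$-coefficients to vanish (independence of $\{x^kG_3\}$), after which the first and second coordinates force the $S_1$- and $S_2$-coefficients to vanish respectively. Since $|S_1|+|S_2|+|S_3|=r+s+t-\deg(F_1)-\deg(F_2)-\deg(G_3)=d$ and the three blocks are disjoint (those in $S_3$ have nonzero third coordinate once $G_3\ne0$), this exhibits a $\mathbb{Z}_2$-basis of size $d$; part (5) then reads $|\mathcal{C}|=2^d$ immediately.

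The main obstacle is the argument in (3): the shifts $x^j*(G_1\mid G_2\mid G_3)$ with $j\ge t-\deg(G_3)$ are not themselves in $S_3$, and their lower coordinates $x^jG_1,x^jG_2$ need not reduce into the spans of $S_1,S_2$ on their own. The point that makes it work is that one reduces \emph{only} the $G_3$-coordinate into its cyclic basis, so the leftover is automatically annihilated in the third coordinate, falls into $\mathrm{Ker}(\Phi)$, and is absorbed by $S_1\cup S_2$ via Theorem \ref{T1}. I would take care to record the mild nondegeneracy assumption $G_3\ne0$ (otherwise $\mathcal{C}$ collapses to the double-cyclic case of the Remark) so that $\deg(G_3)$ and the disjointness count in (4) are meaningful.
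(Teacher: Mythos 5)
Your proposal is correct and follows essentially the same route as the paper: high-degree shifts of $(G_1\mid G_2\mid G_3)$ are split into an $S_3$-part plus a vector with vanishing third coordinate, which is then absorbed into $\langle S_1\cup S_2\rangle_{\mathbb{Z}_2}$ --- you do this via ${\rm Ker}(\Phi)$ from Theorem \ref{T1}, while the paper does the equivalent Euclidean division $p_2=\frac{x^t-1}{G_3}q_2+r_2$ and invokes Proposition \ref{P2}(1), which is itself proved from that same kernel description. Your explicit linear-independence count for part (4) (reading off the third, then first and second coordinates, using that the shifts $x^kG_3$, $x^iF_1$, $x^jF_2$ in the stated ranges have distinct degrees) is a welcome detail, since the paper disposes of (4) and (5) with ``by the previous parts.''
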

\begin{proof}
(1) It is obvious that $\langle S_1\rangle_{\mathbb{Z}_2}\subseteq\big\langle(F_1(x)\mid0\mid0)\big\rangle$.
Let $p_1(x)\in\mathbb{Z}_2[x]$. We show that $p_1(x)*(F_1(x)\mid0\mid0)\in\langle S_1\rangle_{\mathbb{Z}_2}$.
If $deg(p_1(x))\leq r-deg(F_1(x))-1$, then we are done. Otherwise, there exist polynomials $q_1(x),r_1(x)\in\mathbb{Z}_2[x]$
such that $p_1(x)=\frac{x^r-1}{F_1(x)}q_1(x)+r_1(x)$ where $r_1(x)=0$ or $deg(r_1(x))\leq r-deg(F_1(x))-1$. Therefore
\begin{eqnarray*}
p_1(x)*(F_1(x)\mid0\mid0)&=&\frac{x^r-1}{F_1(x)}q_1(x)*(F_1(x)\mid0\mid0)+r_1(x)*(F_1(x)\mid0\mid0)\\
&=&q_1(x)*(\frac{x^r-1}{F_1(x)}F_1(x)\mid0\mid0)+r_1(x)*(F_1(x)\mid0\mid0)\\
&=&r_1(x)*(F_1(x)\mid0\mid0)\in\langle S_1\rangle_{\mathbb{Z}_2}.
\end{eqnarray*}
So $\big\langle(F_1(x)\mid0\mid0)\big\rangle\subseteq\langle S_1\rangle_{\mathbb{Z}_2}$ and the equality holds.\\
(2) Similar to the proof of part (1).\\
(3) Get a polynomial $p_2(x)\in\mathbb{Z}_2[x]$. We prove that $p_2(x)*(G_1(x)\mid G_2(x)\mid G_3(x))\in\langle S_1\cup S_2\cup S_3\rangle_{\mathbb{Z}_2}$. If $deg(p_2(x))\leq t-deg(G_3(x))-1$, then $p_2(x)*(G_1(x)\mid G_2(x)\mid G_3(x))\in\langle S_3\rangle_{\mathbb{Z}_2}$.
Otherwise, there exist $q_2(x),r_2(x)\in\mathbb{Z}_2[x]$
such that $p_2(x)=\frac{x^t-1}{G_3(x)}q_2(x)+r_2(x)$ where $r_2(x)=0$ or $deg(r_2(x))\leq t-deg(G_3(x))-1$. Hence
{\small
\begin{eqnarray*}
p_2(x)*(G_1(x)\mid G_2(x)\mid G_3(x))&=&\frac{x^t-1}{G_3(x)}q_2(x)*(G_1(x)\mid G_2(x)\mid G_3(x))\\&+&r_2(x)*(G_1(x)\mid G_2(x)\mid G_3(x))\\
&=&q_2(x)*(\frac{x^t-1}{G_3(x)}G_1(x)\mid\frac{x^t-1}{G_3(x)}G_2(x)\mid0)\\&+&r_2(x)*(G_1(x)\mid G_2(x)\mid G_3(x)).
\end{eqnarray*}}
Clearly $r_2(x)*(G_1(x)\mid G_2(x)\mid G_3(x))\in\langle S_3\rangle_{\mathbb{Z}_2}$. By Proposition \ref{P2}(1),
$F_1(x)\mid\frac{x^t-1}{G_3(x)}G_1(x)$ and $F_2(x)\mid\frac{x^t-1}{G_3(x)}G_2(x)$. So, parts (1) and (2) imply that
$$q_2(x)*(\frac{x^t-1}{G_3(x)}G_1(x)\mid\frac{x^t-1}{G_3(x)}G_2(x)\mid0)\in\langle S_1\cup S_2\rangle_{\mathbb{Z}_2}.$$
Consequently the claim holds.\\
(4) By the previous parts.\\
(5) By part (4).
\end{proof}

\begin{corollary}
Let $\mathcal{C}$ be a triple cyclic code of lenght $(r,s,t)$ over $\mathbb{Z}_2$.
\begin{enumerate}
\item If $\mathcal{C}=\big\langle(F_1(x)\mid0\mid0)\big\rangle$ where
$F_1(x)\in\mathbb{Z}_2[x]$ with $F_1(x)\mid x^r-1$, then every
codeword $c(x)$ of $\mathcal{C}$ is in the form of 
$c(x)=p(x)*(F_1(x)\mid0\mid0)$
where $p(x)$ is a polynomial in $\mathbb{Z}_2[x]$ with $deg(p(x))=r-deg(F_1(x))-1$.
\item If $\mathcal{C}=\big\langle(0\mid F_2(x)\mid0)\big\rangle$ where
$F_2(x)\in\mathbb{Z}_2[x]$ with $F_2(x)\mid x^s-1$, then every
codeword $c(x)$ of $\mathcal{C}$ is in the form of $c(x)=p(x)*(0\mid F_2(x)\mid0)$
where $p(x)$ is a polynomial in $\mathbb{Z}_2[x]$ with $deg(p(x))=s-deg(F_2(x))-1$.
\item If $\mathcal{C}=\big\langle(G_1(x)\mid G_2(x)\mid G_3(x))\big\rangle$ where
$G_1(x),G_2(x),G_3(x)\in\mathbb{Z}_2[x]$ with  $G_3(x)\mid x^t-1$, then every
codeword $c(x)$ of $\mathcal{C}$ is in the form of 
$c(x)=p(x)*(G_1(x)\mid G_2(x)\mid G_3(x))$
where  $p(x)$ is a polynomial in $\mathbb{Z}_2[x]$ with $deg(p(x))=t-deg(G_3(x))-1$.
\item If $\mathcal{C}=\big\langle(F_1(x)\mid0\mid0),(0\mid F_2(x)\mid0)\big\rangle$ where
$F_1(x),F_2(x)\in\mathbb{Z}_2[x]$ with $F_1(x)\mid x^r-1$, $F_2(x)\mid x^s-1$, then every
codeword $c(x)$ of $\mathcal{C}$ is in the form of 
$$c(x)=p_1(x)*(F_1(x)\mid0\mid0)+p_2(x)*(0\mid F_2(x)\mid0)$$
where $p_1(x)$ and $p_2(x)$ are polynomials in $\mathbb{Z}_2[x]$ with 
$$deg(p_1(x))=r-deg(F_1(x))-1 \mbox{ and } deg(p_2(x))=s-deg(F_2(x))-1.$$
\item If $\mathcal{C}=\big\langle(F_1(x)\mid0\mid0),(G_1(x)\mid G_2(x)\mid G_3(x))\big\rangle$ where
$F_1(x),G_1(x),G_2(x),\\G_3(x)\in\mathbb{Z}_2[x]$ with $F_1(x)\mid x^r-1$ and $G_3(x)\mid x^t-1$, then every
codeword $c(x)$ of $\mathcal{C}$ is in the form of 
$$c(x)=p_1(x)*(F_1(x)\mid0\mid0)+p_2(x)*(G_1(x)\mid G_2(x)\mid G_3(x))$$
where $p_1(x)$ and $p_2(x)$ are polynomials in $\mathbb{Z}_2[x]$ with 
$$deg(p_1(x))=r-deg(F_1(x))-1 \mbox{ and } deg(p_2(x))=t-deg(G_3(x))-1.$$
\item If $\mathcal{C}=\big\langle(0\mid F_2(x)\mid0),(G_1(x)\mid G_2(x)\mid G_3(x))\big\rangle$ where
$F_2(x),G_1(x),G_2(x),\\G_3(x)\in\mathbb{Z}_2[x]$ with $F_2(x)\mid x^s-1$ and $G_3(x)\mid x^t-1$, then every
codeword $c(x)$ of $\mathcal{C}$ is in the form of 
$$c(x)=p_1(x)*(0\mid F_2(x)\mid0)+p_2(x)*(G_1(x)\mid G_2(x)\mid G_3(x))$$
where $p_1(x)$ and $p_2(x)$ are polynomials in $\mathbb{Z}_2[x]$ with 
$$deg(p_1(x))=s-deg(F_2(x))-1 \mbox{ and } deg(p_2(x))=t-deg(G_3(x))-1.$$
\item If $\mathcal{C}=\big\langle(F_1(x)\mid0\mid0),(0\mid F_2(x)\mid0),(G_1(x)\mid G_2(x)\mid G_3(x))\big\rangle$ where
$F_1(x),F_2(x),G_1(x),G_2(x),G_3(x)\in\mathbb{Z}_2[x]$ with $F_1(x)\mid x^r-1$, $F_2(x)\mid x^s-1$ and $G_3(x)\mid x^t-1$, then every
codeword $c(x)$ of $\mathcal{C}$ is in the form of 
{\small
$$\hspace{1cm}c(x)=p_1(x)*(F_1(x)\mid0\mid0)+p_2(x)*(0\mid F_2(x)\mid0)+p_3(x)*(G_1(x)\mid G_2(x)\mid G_3(x))$$}
where $p_1(x)$,$p_2(x)$ and $p_3(x)$ are polynomials in $\mathbb{Z}_2[x]$ with $deg(p_1(x))=r-deg(F_1(x))-1$, $deg(p_2(x))=s-deg(F_2(x))-1$ and $deg(p_3(x))=t-deg(G_3(x))-1$.
\end{enumerate}
\end{corollary}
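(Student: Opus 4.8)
The plan is to deduce all seven statements from Theorem \ref{T2} by means of a single elementary dictionary between cyclic shifts and bounded-degree multipliers. For a fixed $g\in\mathcal{R}_{r,s,t}$ and an integer $k\geq 1$, distributivity of the external product $*$ over addition gives
$$\sum_{i=0}^{k-1}a_i\,\big(x^{i}*g\big)=\Big(\sum_{i=0}^{k-1}a_ix^{i}\Big)*g=p(x)*g,\qquad a_i\in\mathbb{Z}_2,$$
where $p(x)=\sum_{i=0}^{k-1}a_ix^{i}$ has $\deg(p(x))\leq k-1$; conversely every such $p(x)*g$ is one of these combinations. Hence writing a codeword as $p(x)*g$ with a degree bound on $p(x)$ is exactly the same as placing it in the $\mathbb{Z}_2$-span of the finite shift set $\{x^{i}*g:0\leq i\leq k-1\}$. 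With this translation the corollary becomes a restatement of the generating-set computations already performed.

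First I would settle the single-generator cases. For (1), Theorem \ref{T2}(1) gives $\langle(F_1(x)\mid0\mid0)\rangle=\langle S_1\rangle_{\mathbb{Z}_2}$, and $S_1$ is precisely the shift set with $k=r-\deg(F_1(x))$; the dictionary then yields $c(x)=p(x)*(F_1(x)\mid0\mid0)$ with $\deg(p(x))\leq r-\deg(F_1(x))-1$. Part (2) is identical via Theorem \ref{T2}(2) and $S_2$. For (3) I would run the division step from the proof of Theorem \ref{T2}(3): given $p(x)$, write $p(x)=\tfrac{x^{t}-1}{G_3(x)}q(x)+r_2(x)$ with $r_2(x)=0$ or $\deg(r_2(x))\leq t-\deg(G_3(x))-1$, so that $p(x)*(G_1(x)\mid G_2(x)\mid G_3(x))$ splits as $r_2(x)*(G_1(x)\mid G_2(x)\mid G_3(x))\in\langle S_3\rangle_{\mathbb{Z}_2}$ plus the overflow term $q(x)*\big(\tfrac{x^{t}-1}{G_3(x)}G_1(x)\mid\tfrac{x^{t}-1}{G_3(x)}G_2(x)\mid0\big)$, whose third block vanishes by Remark \ref{rem1}.

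Next come the mixed cases (4)--(7), obtained by taking $\mathbb{Z}_2$-spans of the appropriate unions of shift sets: $S_1\cup S_2$ for (4), $S_1\cup S_3$ for (5), $S_2\cup S_3$ for (6), and $S_1\cup S_2\cup S_3$ for (7). In each case the dictionary turns a $\mathbb{Z}_2$-combination drawn from the relevant shift sets into a sum $p_1(x)*(\cdot)+p_2(x)*(\cdot)\,(+\,p_3(x)*(\cdot))$ carrying one degree bound per generator. Part (7) is then exactly Theorem \ref{T2}(4) rephrased for codewords, and (4)--(6) are its specializations to the indicated subfamily of generators.

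The step I expect to be the main obstacle is the \emph{absorption of the overflow term} produced whenever the multiplier of the third-block generator is reduced, as in the proof of Theorem \ref{T2}(3). After dividing $p_3(x)$ by $\tfrac{x^{t}-1}{G_3(x)}$, the leftover piece $q(x)*\big(\tfrac{x^{t}-1}{G_3(x)}G_1(x)\mid\tfrac{x^{t}-1}{G_3(x)}G_2(x)\mid0\big)$ lives entirely in the first two blocks, and one must verify it already lies in the $\mathbb{Z}_2$-span of the shifts already listed. This is precisely the role of Proposition \ref{P2}(1), through the divisibilities $F_1(x)\mid\tfrac{x^{t}-1}{G_3(x)}G_1(x)$ and $F_2(x)\mid\tfrac{x^{t}-1}{G_3(x)}G_2(x)$ combined with parts (1) and (2) of Theorem \ref{T2}; this is the one point where genuine structure of the code, rather than bookkeeping, is used, and it is also the point deserving care in case (3), where no first- and second-block generators are available to absorb the overflow. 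Once this closure is established the degree bounds follow mechanically, and the rest of the argument is the routine application of the shift-to-multiplier dictionary.
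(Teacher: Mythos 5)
Your proposal is correct and takes essentially the same route the paper intends: the corollary appears there with no proof at all, as an immediate consequence of Theorem \ref{T2}, and your shift-to-multiplier dictionary together with the absorption of the overflow term via Proposition \ref{P2}(1) and Theorem \ref{T2}(1),(2) is precisely the unpacking of that theorem's minimal generating sets. On the one point you flag as delicate, case (3), the closure is obtained exactly as you suspect, and the argument is inherited from the paper's own conventions rather than anything you must add: one regards the missing generators as $F_1(x)=x^r-1$ and $F_2(x)=x^s-1$, so Proposition \ref{P2}(1) makes the two blocks of the overflow term vanish modulo $x^r-1$ and $x^s-1$ respectively (this tacitly assumes, as the paper does throughout, that the given generators are the ones produced by Theorem \ref{T1}).
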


\begin{proposition}\label{P9}
Let $\mathcal{C}=\big\langle(F_1(x)\mid0\mid0),(0\mid F_2(x)\mid0),(G_1(x)\mid G_2(x)\mid G_3(x))\big\rangle$ be a triple cyclic code of lenght $(r,s,t)$ over $\mathbb{Z}_2$. 
Then $F_1(x)\mid G_1(x)$ if and only if
$\mathcal{C}=\big\langle(F_1(x)\mid0\mid0),(0\mid F_2(x)\mid0),(0\mid G_2(x)\mid G_3(x))\big\rangle$, i.e, we may assume that
$G_1(x)=0$.
\end{proposition}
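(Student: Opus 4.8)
The plan is to prove the biconditional by establishing its two implications separately, each reducing to a one-line manipulation of the generators under the external multiplication $*$.

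First I would treat the forward implication. Assuming $F_1(x)\mid G_1(x)$, I write $G_1(x)=F_1(x)h(x)$ for some $h(x)\in\mathbb{Z}_2[x]$, so that $h(x)*(F_1(x)\mid0\mid0)=(G_1(x)\mid0\mid0)$ in $\mathcal{R}_{r,s,t}$. Subtracting,
$$(0\mid G_2(x)\mid G_3(x))=(G_1(x)\mid G_2(x)\mid G_3(x))-h(x)*(F_1(x)\mid0\mid0)\in\mathcal{C},$$
which yields the inclusion $\big\langle(F_1\mid0\mid0),(0\mid F_2\mid0),(0\mid G_2\mid G_3)\big\rangle\subseteq\mathcal{C}$. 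Rearranging the same identity writes $(G_1\mid G_2\mid G_3)$ as $h(x)*(F_1\mid0\mid0)+(0\mid G_2\mid G_3)$, giving the reverse inclusion; since the generators $(F_1\mid0\mid0)$ and $(0\mid F_2\mid0)$ are common to both descriptions, the two codes coincide.

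For the reverse implication I would start from the hypothesis that $\mathcal{C}=\big\langle(F_1\mid0\mid0),(0\mid F_2\mid0),(0\mid G_2\mid G_3)\big\rangle$. Because $(G_1(x)\mid G_2(x)\mid G_3(x))$ is one of the generators in the original description of $\mathcal{C}$, it lies in $\mathcal{C}$, so there exist $a(x),b(x),c(x)\in\mathbb{Z}_2[x]$ with
$$(G_1\mid G_2\mid G_3)=a(x)*(F_1\mid0\mid0)+b(x)*(0\mid F_2\mid0)+c(x)*(0\mid G_2\mid G_3).$$
Reading off the first coordinate gives $G_1(x)=a(x)F_1(x)$ in $\frac{\mathbb{Z}_2[x]}{\langle x^r-1\rangle}$, and from this I would conclude $F_1(x)\mid G_1(x)$.

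The only point requiring care is this last deduction. The equality in the quotient only says $G_1(x)=a(x)F_1(x)+m(x)(x^r-1)$ in $\mathbb{Z}_2[x]$ for some $m(x)$, which is not yet genuine divisibility. Here I would invoke the standing hypothesis $F_1(x)\mid x^r-1$ from Theorem \ref{T1}: writing $x^r-1=F_1(x)k(x)$ turns the right-hand side into $F_1(x)\big(a(x)+m(x)k(x)\big)$, so $F_1(x)\mid G_1(x)$ as polynomials. All remaining steps are routine bookkeeping with the external multiplication $*$.
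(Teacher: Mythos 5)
Your proof is correct and takes essentially the same approach as the paper: the forward implication is the paper's argument verbatim (write $G_1(x)=h(x)F_1(x)$ and exchange the third generator in both directions, noting that over $\mathbb{Z}_2$ addition and subtraction coincide). For the reverse implication, which the paper dismisses as ``evident,'' you supply the details correctly, including the genuinely necessary observation that the congruence $G_1(x)\equiv a(x)F_1(x) \pmod{x^r-1}$ upgrades to true divisibility $F_1(x)\mid G_1(x)$ in $\mathbb{Z}_2[x]$ precisely because $F_1(x)\mid x^r-1$.
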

\begin{proof}
The ``if'' part is evident. \\
Suppose that $F_1(x)\mid G_1(x)$. 
Then, there exists a polynomial $\lambda(x)$ in $\mathbb{Z}_2[x]$ such that $G_1(x)=\lambda(x)F_1(x)$.
Set $$\mathcal{C}^{\prime}=\big\langle(F_1(x)\mid0\mid0),(0\mid F_2(x)\mid0),(0\mid G_2(x)\mid G_3(x))\big\rangle.$$
Notice that $$(0\mid G_2(x)\mid G_3(x))=\lambda(x)(F_1(x)\mid0\mid0)+(G_1(x)\mid G_2(x)\mid G_3(x)).$$
Hence $\mathcal{C}^{\prime}\subseteq\mathcal{C}$.
On the other hand
$$(G_1(x)\mid G_2(x)\mid G_3(x))=\lambda(x)(F_1(x)\mid0\mid0)+(0\mid G_2(x)\mid G_3(x)).$$
So $\mathcal{C}\subseteq\mathcal{C}^{\prime}$.
\end{proof}

Similar to the previous proposition we have the next result.
\begin{proposition}\label{P10}
Let $\mathcal{C}=\big\langle(F_1(x)\mid0\mid0),(0\mid F_2(x)\mid0),(G_1(x)\mid G_2(x)\mid G_3(x))\big\rangle$ be a triple cyclic code of lenght $(r,s,t)$ over $\mathbb{Z}_2$. 
Then $F_2(x)\mid G_2(x)$ if and only if
$\mathcal{C}=\big\langle(F_1(x)\mid0\mid0),(0\mid F_2(x)\mid0),(G_1(x)\mid 0\mid G_3(x))\big\rangle$, i.e, we may assume that
$G_2(x)=0$.
\end{proposition}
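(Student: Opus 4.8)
The plan is to follow the proof of Proposition~\ref{P9} verbatim, swapping the roles of the first and second coordinates throughout. Write
$$\mathcal{C}'=\big\langle(F_1(x)\mid0\mid0),(0\mid F_2(x)\mid0),(G_1(x)\mid0\mid G_3(x))\big\rangle$$
for the candidate simplified code. I would dispose of the ``if'' direction (that $\mathcal{C}=\mathcal{C}'$ forces $F_2(x)\mid G_2(x)$) at once: each of the three generators of $\mathcal{C}'$ has its second component lying in $\langle F_2(x)\rangle$, so the projection of $\mathcal{C}'$ onto the middle factor is contained in $\langle F_2(x)\rangle$. Since $(G_1(x)\mid G_2(x)\mid G_3(x))\in\mathcal{C}=\mathcal{C}'$, its middle component $G_2(x)$ must lie in $\langle F_2(x)\rangle$, and because $F_2(x)\mid x^s-1$ this says exactly $F_2(x)\mid G_2(x)$.

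For the ``only if'' direction, I would assume $F_2(x)\mid G_2(x)$ and pick $\mu(x)\in\mathbb{Z}_2[x]$ with $G_2(x)=\mu(x)F_2(x)$. The key identity, valid because we work in characteristic $2$ so that $G_2(x)+G_2(x)=0$, is
$$(G_1(x)\mid0\mid G_3(x))=\mu(x)*(0\mid F_2(x)\mid0)+(G_1(x)\mid G_2(x)\mid G_3(x)).$$
This exhibits the one generator of $\mathcal{C}'$ that differs from those of $\mathcal{C}$ as a $\mathbb{Z}_2[x]$-combination of generators of $\mathcal{C}$, giving $\mathcal{C}'\subseteq\mathcal{C}$. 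Reading the same identity the other way,
$$(G_1(x)\mid G_2(x)\mid G_3(x))=\mu(x)*(0\mid F_2(x)\mid0)+(G_1(x)\mid0\mid G_3(x))\in\mathcal{C}',$$
yields $\mathcal{C}\subseteq\mathcal{C}'$, and hence equality.

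I do not expect a genuine obstacle here, since the argument is formally identical to Proposition~\ref{P9}. The only points demanding care are bookkeeping ones: confirming that the cancellation $G_2(x)+G_2(x)=0$ is precisely what makes the middle component vanish (this is where the binary hypothesis enters), and making the ``evident'' direction rigorous by observing that no generator of $\mathcal{C}'$ contributes to the middle coordinate outside $\langle F_2(x)\rangle$. Everything else is the same containment-both-ways computation already carried out for the first coordinate.
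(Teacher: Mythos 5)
Your proposal is correct and is exactly what the paper intends: Proposition \ref{P10} is stated there with no separate proof, only the remark that it is ``similar to the previous proposition,'' and your argument is precisely the proof of Proposition \ref{P9} with the first and second coordinates interchanged (the characteristic-$2$ cancellation identity in both containments), plus a rigorous filling-in of the ``evident'' direction via the projection onto the middle factor.
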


\begin{proposition}\label{P11}
Let $\mathcal{C}=\big\langle(F_1(x)\mid0\mid0),(0\mid F_2(x)\mid0),(G_1(x)\mid G_2(x)\mid G_3(x))\big\rangle$ be a triple cyclic code of lenght $(r,s,t)$ over $\mathbb{Z}_2$. 
The following conditions are equivalent:
\begin{enumerate}
\item $\mathcal{C}$ is separable;
\item $F_1(x)\mid G_1(x)$ and $F_2(x)\mid G_2(x)$;
\item $\mathcal{C}_r=\langle F_1(x)\rangle$ and $\mathcal{C}_s=\langle F_2(x)\rangle$;
\item $\mathcal{C}=\big\langle(F_1(x)\mid0\mid0),(0\mid F_2(x)\mid0),(0\mid0\mid G_3(x))\big\rangle$, i.e, we may assume that
$G_1(x)=0$ and $G_2(x)=0$.
\end{enumerate}
\end{proposition}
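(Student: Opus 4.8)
The plan is to prove the four conditions equivalent by running the cycle of implications $(1)\Rightarrow(3)\Rightarrow(2)\Rightarrow(4)\Rightarrow(1)$, using three ingredients already at hand: the explicit description of $\mathrm{Ker}(\Phi)$ extracted in the proof of Theorem~\ref{T1}, the computation of the projections $\mathcal{C}_r,\mathcal{C}_s,\mathcal{C}_t$ in Proposition~\ref{P2}(3), and the two ``coordinate-removal'' Propositions~\ref{P9} and~\ref{P10}.

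For $(1)\Rightarrow(3)$ I would recall from the proof of Theorem~\ref{T1} that, with $\Phi$ the projection onto the last block, $\mathrm{Ker}(\Phi)=\langle(F_1(x)\mid0\mid0),(0\mid F_2(x)\mid0)\rangle=\langle F_1(x)\rangle\times\langle F_2(x)\rangle\times\{0\}$ as a submodule of $\mathcal{R}_{r,s,t}$. On the other hand, if $\mathcal{C}=\mathcal{C}_r\times\mathcal{C}_s\times\mathcal{C}_t$ is separable, then directly $\mathrm{Ker}(\Phi)=\{(c_1\mid c_2\mid0)\in\mathcal{C}\}=\mathcal{C}_r\times\mathcal{C}_s\times\{0\}$. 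Comparing these two product decompositions coordinatewise yields $\mathcal{C}_r=\langle F_1(x)\rangle$ and $\mathcal{C}_s=\langle F_2(x)\rangle$, which is $(3)$.

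For $(3)\Rightarrow(2)$ I would invoke Proposition~\ref{P2}(3), which gives $\mathcal{C}_r=\langle\gcd(F_1(x),G_1(x))\rangle$. Since $\gcd(F_1(x),G_1(x))$ divides $F_1(x)$ and $F_1(x)\mid x^r-1$, both $\gcd(F_1(x),G_1(x))$ and $F_1(x)$ are divisors of $x^r-1$; as such a divisor is the unique generator polynomial of a binary cyclic code, the equality $\langle\gcd(F_1(x),G_1(x))\rangle=\langle F_1(x)\rangle$ forces $\gcd(F_1(x),G_1(x))=F_1(x)$, i.e.\ $F_1(x)\mid G_1(x)$, and similarly $F_2(x)\mid G_2(x)$, which is $(2)$. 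Next, for $(2)\Rightarrow(4)$ I would apply the removal propositions in sequence: since $F_1(x)\mid G_1(x)$, Proposition~\ref{P9} rewrites $\mathcal{C}=\langle(F_1(x)\mid0\mid0),(0\mid F_2(x)\mid0),(0\mid G_2(x)\mid G_3(x))\rangle$, a presentation of the same standard form in which the last generator now has vanishing first component; applying Proposition~\ref{P10} to this presentation together with $F_2(x)\mid G_2(x)$ gives $\mathcal{C}=\langle(F_1(x)\mid0\mid0),(0\mid F_2(x)\mid0),(0\mid0\mid G_3(x))\rangle$, which is $(4)$. Finally $(4)\Rightarrow(1)$ is immediate: each generator in $(4)$ is supported on a single block, so the module they generate splits as $\langle F_1(x)\rangle\times\langle F_2(x)\rangle\times\langle G_3(x)\rangle$, and by Proposition~\ref{P2}(3) applied to this presentation (where $G_1=G_2=0$) these three factors are exactly $\mathcal{C}_r,\mathcal{C}_s,\mathcal{C}_t$, so $\mathcal{C}=\mathcal{C}_r\times\mathcal{C}_s\times\mathcal{C}_t$ is separable.

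I expect the main obstacle to be the bookkeeping in $(2)\Rightarrow(4)$: one must verify that after Proposition~\ref{P9} replaces the mixed generator by $(0\mid G_2(x)\mid G_3(x))$, the result is genuinely a new presentation of the \emph{same} code in the form required by Proposition~\ref{P10}, so that the second reduction is legitimate and does not disturb the already-removed first component. A secondary point needing care is the passage from the module equality $\langle\gcd(F_1(x),G_1(x))\rangle=\langle F_1(x)\rangle$ in $\mathbb{Z}_2[x]/\langle x^r-1\rangle$ to honest polynomial divisibility $F_1(x)\mid G_1(x)$, which rests on $F_1(x)\mid x^r-1$ together with the uniqueness of the cyclic-code generator.
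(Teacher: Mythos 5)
Your proposal is correct, but it routes the equivalences differently from the paper. The paper proves $(1)\Rightarrow(2)$ directly: separability plus Proposition~\ref{P2}(3) gives $\mathcal{C}=\langle\gcd(F_1,G_1)\rangle\times\langle\gcd(F_2,G_2)\rangle\times\langle G_3\rangle$, so $(\gcd(F_1(x),G_1(x))\mid 0\mid 0)\in\mathcal{C}$, which forces $\gcd(F_1(x),G_1(x))=\lambda(x)F_1(x)$ and hence $F_1(x)\mid G_1(x)$; it then dismisses $(2)\Leftrightarrow(3)$ as straightforward, proves $(2)\Rightarrow(4)$ by the single identity $(0\mid0\mid G_3(x))=\lambda_1(x)(F_1(x)\mid0\mid0)+\lambda_2(x)(0\mid F_2(x)\mid0)+(G_1(x)\mid G_2(x)\mid G_3(x))$, and gets $(4)\Rightarrow(1)$ exactly as you do. You instead run $(1)\Rightarrow(3)$ by comparing the two descriptions of $\mathrm{Ker}(\Phi)$, then $(3)\Rightarrow(2)$ via Proposition~\ref{P2}(3) and uniqueness of the generator polynomial of a cyclic code, and you obtain $(2)\Rightarrow(4)$ by chaining Propositions~\ref{P9} and~\ref{P10} rather than writing the one-line computation. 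Both routes are sound and lean on the same implicit convention, namely that the given presentation is the canonical one from Theorem~\ref{T1}, so that $\mathrm{Ker}(\Phi)=\langle(F_1(x)\mid0\mid0),(0\mid F_2(x)\mid0)\rangle$; your $(1)\Rightarrow(3)$ makes this reliance explicit, whereas in the paper it is buried in the deduction that $(\gcd(F_1(x),G_1(x))\mid0\mid0)\in\mathcal{C}$ implies $\gcd(F_1(x),G_1(x))\in\langle F_1(x)\rangle$. What your version buys is modularity: $(2)\Rightarrow(4)$ becomes a free consequence of results already proved (and your worry about the two-step reduction is easily settled, since the proof of Proposition~\ref{P10} only alters the second component of the third generator, leaving the zero in the first component intact). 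What the paper's version buys is brevity: its $(2)\Rightarrow(4)$ is self-contained in one displayed equation, and its $(1)\Rightarrow(2)$ avoids an explicit detour through the kernel.
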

\begin{proof}
(1)$\Rightarrow$(2) Assume that $\mathcal{C}$ is separable. Then 
$$\mathcal{C}=\mathcal{C}_r\times\mathcal{C}_s\times\mathcal{C}_t=\langle gcd(F_1(x),G_1(x))\rangle\times\langle gcd(F_2(x),G_2(x))\rangle\times\langle G_3(x)\rangle,$$
by Proposition \ref{P2}(3).
Since $(gcd(F_1(x),G_1(x))\mid0\mid0)\in\mathcal{C}$, then we can deduce that $gcd(F_1(x),G_1(x))=\lambda(x)F_1(x)$
for some $\lambda(x)\in\mathbb{Z}_2[x]$.
Therefore  $F_1(x)\mid G_1(x)$. Also, it is easy to verify that $F_2(x)\mid G_2(x)$.\\
(2)$\Leftrightarrow$(3) is straightforward.\\
(2)$\Rightarrow$(4) Suppose that $F_1(x)\mid G_1(x)$ and $F_2(x)\mid G_2(x)$. 
Then, there exist two polynomials $\lambda_1(x),\lambda_2(x)$ in $\mathbb{Z}_2[x]$ such that $G_1(x)=\lambda_1(x)F_1(x)$
and $G_2(x)=\lambda_2(x)F_2(x)$.
So, by the equality
 $$(0\mid0\mid G_3(x))=\lambda_1(x)(F_1(x)\mid0\mid0)+\lambda_2(x)(0\mid F_2(x)\mid0)+(G_1(x)\mid G_2(x)\mid G_3(x)).$$
the result follows.\\
(4)$\Rightarrow$(1) Assume that
$\mathcal{C}=\big\langle(F_1(x)\mid0\mid0),(0\mid F_2(x)\mid0),(0\mid0\mid G_3(x))\big\rangle$.
Hence
$\mathcal{C}=\langle F_1(x)\rangle\times\langle F_2(x)\rangle\times\langle G_3(x)\rangle=\mathcal{C}_r\times\mathcal{C}_s\times\mathcal{C}_t.$
Then $\mathcal{C}$ is separable.
\end{proof}

\begin{proposition}\label{P1}
Let $\mathcal{C}=\big\langle(F_1(x)\mid0\mid0),(0\mid F_2(x)\mid0),(G_1(x)\mid G_2(x)\mid G_3(x))\big\rangle$ be a triple cyclic code of lenght $(r,s,t)$ over $\mathbb{Z}_2$. 
The following conditions hold:
\begin{enumerate}
\item It can be assumed that $deg(G_1(x))\leq deg(F_1(x))$ and $deg(G_2(x))\leq deg(F_2(x))$.
\item $\mathcal{C}=\big\langle(F_1(x)\mid0\mid0),(0\mid F_2(x)\mid0),(F_1(x)+G_1(x)\mid F_2(x)+G_2(x)\mid G_3(x))\big\rangle$.
\item If $G_3(x)=0$, then $$\mathcal{C}\subseteq\big\langle(gcd(F_1(x),G_1(x))\mid0\mid0),(0\mid gcd(F_2(x),G_2(x))\mid0)\big\rangle.$$
\item If $G_1(x)=G_3(x)=0$, then $\mathcal{C}=\big\langle(F_1(x)\mid0\mid0),(0\mid gcd(F_2(x),G_2(x))\mid0)\big\rangle$.
\item If $G_2(x)=G_3(x)=0$, then $\mathcal{C}=\big\langle(gcd(F_1(x),G_1(x))\mid0\mid0),(0\mid F_2(x)\mid0)\big\rangle$.
\end{enumerate}
\end{proposition}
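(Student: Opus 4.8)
The plan is to handle all five parts by exploiting that $(F_1(x)\mid0\mid0)$ and $(0\mid F_2(x)\mid0)$ already lie in $\mathcal{C}$, so the first two coordinates of the mixed generator $(G_1(x)\mid G_2(x)\mid G_3(x))$ may be modified by $\mathbb{Z}_2[x]$-multiples of $F_1(x)$ and $F_2(x)$ without changing the generated module. For part (1), since $F_1(x)\mid x^r-1$ and $F_2(x)\mid x^s-1$ are nonzero, I would apply the division algorithm in $\mathbb{Z}_2[x]$ to write $G_1(x)=q_1(x)F_1(x)+G_1'(x)$ and $G_2(x)=q_2(x)F_2(x)+G_2'(x)$ with $G_i'(x)=0$ or $deg(G_i'(x))<deg(F_i(x))$. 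Then
$$(G_1'(x)\mid G_2'(x)\mid G_3(x))=(G_1(x)\mid G_2(x)\mid G_3(x))+q_1(x)*(F_1(x)\mid0\mid0)+q_2(x)*(0\mid F_2(x)\mid0)$$
lies in $\mathcal{C}$, and reading the same identity backwards recovers the original mixed generator from the replaced one; hence the generating set can be chosen with $deg(G_1)\le deg(F_1)$ and $deg(G_2)\le deg(F_2)$.

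Part (2) I would settle directly. Denoting by $\mathcal{C}'$ the right-hand code, its mixed generator equals $(F_1\mid0\mid0)+(0\mid F_2\mid0)+(G_1\mid G_2\mid G_3)$, giving $\mathcal{C}'\subseteq\mathcal{C}$; and since $1+1=0$ in $\mathbb{Z}_2$, the identity $(G_1\mid G_2\mid G_3)=(F_1\mid0\mid0)+(0\mid F_2\mid0)+(F_1+G_1\mid F_2+G_2\mid G_3)$ gives the reverse inclusion.

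For part (3), with $G_3=0$, I would verify that each of the three generators of $\mathcal{C}$ lies in the right-hand code. Because $gcd(F_1,G_1)$ divides both $F_1$ and $G_1$, and $gcd(F_2,G_2)$ divides both $F_2$ and $G_2$, each of these four polynomials is a $\mathbb{Z}_2[x]$-multiple of the appropriate gcd; multiplying the generators $(gcd(F_1,G_1)\mid0\mid0)$ and $(0\mid gcd(F_2,G_2)\mid0)$ by the relevant quotients (and adding, for the mixed generator) reproduces $(F_1\mid0\mid0)$, $(0\mid F_2\mid0)$ and $(G_1\mid G_2\mid0)$. This yields the stated inclusion. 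I expect the subtle point here is that equality generally fails: producing $gcd(F_1,G_1)$ in the first block inside $\mathcal{C}$ forces contributions from the mixed generator and hence a nonzero second block, so $(gcd(F_1,G_1)\mid0\mid0)$ need not itself belong to $\mathcal{C}$.

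Parts (4) and (5) are symmetric, so I would prove (4) and obtain (5) by interchanging the first two blocks. When $G_1=G_3=0$ we have $gcd(F_1,G_1)=F_1$, so part (3) already gives $\mathcal{C}\subseteq\langle(F_1\mid0\mid0),(0\mid gcd(F_2,G_2)\mid0)\rangle$. For the reverse inclusion, $(F_1\mid0\mid0)$ is a generator of $\mathcal{C}$, while B\'ezout's identity $gcd(F_2,G_2)=u(x)F_2+v(x)G_2$ yields $(0\mid gcd(F_2,G_2)\mid0)=u(x)*(0\mid F_2\mid0)+v(x)*(0\mid G_2\mid0)\in\mathcal{C}$, so equality holds. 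The only place needing care throughout is the legitimacy of the division algorithm and of B\'ezout in $\mathbb{Z}_2[x]$, which is assured since $F_1$ and $F_2$ are nonzero divisors of $x^r-1$ and $x^s-1$; everything else is formal manipulation of the generators.
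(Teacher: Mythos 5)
Your proposal is correct and takes essentially the same approach as the paper: your one-shot division algorithm in part (1) is exactly the paper's iterated leading-term reduction $G_1(x)\mapsto G_1(x)+x^{l}F_1(x)$ packaged into a single step, and parts (2)--(5), which the paper dismisses as ``easy,'' are filled in by precisely the intended generator manipulations (characteristic-2 cancellation, divisibility by the gcd, and B\'ezout's identity). Your side remark on why (3) is only an inclusion is a correct and worthwhile observation, consistent with the proposition claiming $\subseteq$ rather than equality there.
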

\begin{proof}
(1) Suppose that $deg(G_1(x))>deg(F_1(x))$ and set 
$$\mathcal{C}^{\prime}=\big\langle(F_1(x)\mid0\mid0),(0\mid F_2(x)\mid0),(G_1(x)+x^{{l}}F_1(x)\mid G_2(x)\mid G_3(x))\big\rangle$$ where ${l}=deg(G_1(x))-deg(F_1(x))$. Notice that $deg(G_1(x)+x^{{l}}F_1(x))<deg(G_1(x))$.
Since $$(G_1(x)+x^{{l}}F_1(x)\mid G_2(x)\mid G_3(x))=x^{{l}}*(F_1(x)\mid0\mid0)+(G_1(x)\mid G_2(x)\mid G_3(x))\in\mathcal{C},$$ then
$\mathcal{C}^{\prime}\subseteq\mathcal{C}$. On the other hand,
$$(G_1(x)\mid G_2(x)\mid G_3(x))=(G_1(x)+x^{{l}}F_1(x)\mid G_2(x)\mid G_3(x))-x^{{l}}*(F_1(x)\mid0\mid0).$$
Hence $\mathcal{C}^{\prime}=\mathcal{C}$. So we would be able to reduce the degree of $G_1(x)$ in $\mathcal{C}$ to reach the claim.
An argument like above can be stated for $deg(G_2(x))\leq deg(F_2(x))$.\\
(2),(3),(4) and (5) are easy.
\end{proof}

\begin{example}
Let $\mathcal{C}=\big\langle(1+x^2\mid0\mid0),(0\mid x+x^5\mid0),(x^3+x^4+x^5\mid x^2+x^6\mid G_3(x))\big\rangle$ be a triple cyclic code over $\mathbb{Z}_2$. Regarding the proof of Proposition \ref{P1},
{\small 
$$\begin{array}{ll}
\mathcal{C}$=$\big\langle(1+x^2\mid0\mid0),(0\mid x+x^5\mid0),(x^3+x^4+x^5+x^{3}(1+x^2)\mid x^2+x^6+x(x+x^5)\mid G_3(x))\big\rangle\\
\hspace{2mm}$=$\big\langle(1+x^2\mid0\mid0),(0\mid x+x^5\mid0),(x^4\mid 0\mid G_3(x))\big\rangle\\
\hspace{2mm}$=$\big\langle(1+x^2\mid0\mid0),(0\mid x+x^5\mid0),(x^4+x^2(1+x^2)\mid 0\mid G_3(x))\big\rangle\\
\hspace{2mm}$=$\big\langle(1+x^2\mid0\mid0),(0\mid x+x^5\mid0),(x^2\mid 0\mid G_3(x))\big\rangle.
\end{array}$$}
\end{example}

\section{Dual codes of triple cyclic codes over $\mathbb{Z}_2$}

\begin{proposition}
$\mathcal{C}$ is a triple cyclic code of lenght $(r,s,t)$ over $\mathbb{Z}_2$ if and only if
$\mathcal{C}^{\bot}$ is a triple cyclic code of lenght $(r,s,t)$ over $\mathbb{Z}_2$.
Moreover, 
$$\mathcal{C}^{\bot}=\{{u}\in\mathbb{Z}_2^r\times\mathbb{Z}_2^s\times\mathbb{Z}_2^t\mid 
u(x)\circ c(x)=0 \mbox{ mod } (x^m-1) \mbox{ for every } {c}\in\mathcal{C}\}.$$
\end{proposition}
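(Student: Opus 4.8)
The plan is to establish the ``Moreover'' characterization and the biconditional essentially independently, using Proposition \ref{P6} for the former and the adjointness of the shift operator for the latter, and finally to obtain the backward implication for free from the double-dual identity $(\mathcal{C}^{\bot})^{\bot}=\mathcal{C}$, which holds for every binary linear code.

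First I would record the one algebraic fact that drives the forward implication. Recall that the triple cyclic shift $\mathcal{T}$ acts as the ordinary right cyclic shift on each of the three blocks, and that multiplication by $x$ in $\mathcal{R}_{r,s,t}$ realizes $\mathcal{T}$. Since $r,s,t$ all divide $m=lcm(r,s,t)$, applying $\mathcal{T}$ exactly $m$ times is the identity on every block, so $\mathcal{T}^{m}=\mathrm{id}$ and hence $\mathcal{T}^{-1}=\mathcal{T}^{m-1}$. A direct index computation on each block gives the adjoint relation
$$\mathcal{T}(u)\cdot v=u\cdot\mathcal{T}^{-1}(v)\qquad\text{for all }u,v\in\mathbb{Z}_2^r\times\mathbb{Z}_2^s\times\mathbb{Z}_2^t,$$
because $\sum_j u_{j-1}v_j=\sum_k u_k v_{k+1}$ in each of the three coordinate ranges.

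Now suppose $\mathcal{C}$ is a triple cyclic code and let $u\in\mathcal{C}^{\bot}$. To see that $\mathcal{T}(u)\in\mathcal{C}^{\bot}$, take any $c\in\mathcal{C}$. Since $\mathcal{C}$ is $\mathcal{T}$-invariant it is invariant under $\mathcal{T}^{m-1}=\mathcal{T}^{-1}$, so $\mathcal{T}^{-1}(c)\in\mathcal{C}$; the adjoint relation then yields $\mathcal{T}(u)\cdot c=u\cdot\mathcal{T}^{-1}(c)=0$. As $c$ was arbitrary, $\mathcal{T}(u)\in\mathcal{C}^{\bot}$, and therefore $\mathcal{C}^{\bot}$ is triple cyclic. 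For the converse, apply what was just proved to the triple cyclic code $\mathcal{C}^{\bot}$: its dual $(\mathcal{C}^{\bot})^{\bot}$ is triple cyclic, and $(\mathcal{C}^{\bot})^{\bot}=\mathcal{C}$, so $\mathcal{C}$ is triple cyclic.

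It remains to identify $\mathcal{C}^{\bot}$ via the bilinear form $\circ$. By definition $u\in\mathcal{C}^{\bot}$ iff $u\cdot c=0$ for every $c\in\mathcal{C}$. Because $\mathcal{C}$ is triple cyclic, every shift $c^{(i)}$ of a codeword $c$ again lies in $\mathcal{C}$, so this is equivalent to requiring that $u$ be orthogonal to $c$ \emph{and to all of its shifts} for each $c\in\mathcal{C}$. By Proposition \ref{P6} the latter condition is exactly $u(x)\circ c(x)=0 \bmod (x^m-1)$, which gives the displayed description of $\mathcal{C}^{\bot}$. The only place requiring genuine care is the adjoint identity together with the order-$m$ claim for $\mathcal{T}$; once these are in hand the rest is bookkeeping.
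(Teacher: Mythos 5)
Your proof is correct and follows essentially the same route as the paper's: the paper's inline computation $c'\cdot\mathcal{T}^{m-1}(c)=c\cdot\mathcal{T}(c')$ is exactly your adjoint relation $\mathcal{T}(u)\cdot v=u\cdot\mathcal{T}^{-1}(v)$ combined with $\mathcal{T}^{m}=\mathrm{id}$, and both arguments finish the converse via $(\mathcal{C}^{\bot})^{\bot}=\mathcal{C}$ and the ``Moreover'' part via Proposition \ref{P6}. Your write-up is in fact slightly more complete, since the paper disposes of the last part with only the citation of Proposition \ref{P6}, whereas you spell out why orthogonality to all codewords is equivalent to orthogonality to each codeword and all of its shifts.
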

\begin{proof}
($\Rightarrow$) Suppose that $\mathcal{C}$ is a triple cyclic code of lenght $(r,s,t)$ over $\mathbb{Z}_2$. Assume that
$${c}^{\prime}=(c^{\prime}_{1,0},c^{\prime}_{1,1},\dots,c^{\prime}_{1,r-1}\mid c^{\prime}_{2,0},c_{2,1},\dots,c^{\prime}_{2,s-1}\mid c^{\prime}_{3,0},c^{\prime}_{3,1},\dots,c^{\prime}_{3,t-1})$$
is a codeword of $\mathcal{C}^\bot$. It is sufficient to show that $\mathcal{T}({c}^{\prime})\in\mathcal{C}^{\bot}$.
Let
$${c}=(c_{1,0},c_{1,1},\dots,c_{1,r-1}\mid c_{2,0},c_{2,1},\dots,c_{2,s-1}\mid c_{3,0},c_{3,1},\dots,c_{3,t-1})$$
be an arbitrary codeword of $\mathcal{C}$. 
Set $m:={\rm lcm}(r,s,t)$. Obviously we have  $\mathcal{T}^m({c})={c}$. Hence 
{\small
$$\mathcal{T}^{m-1}({c})=(c_{1,1},c_{1,2},\dots,c_{1,r-1},c_{1,0}\mid c_{2,1},c_{2,2},\dots,c_{2,s-1},c_{2,0}\mid c_{3,1},c_{3,2},\dots,c_{3,t-1},c_{3,0})\in\mathcal{C}.$$}
\hspace{-1mm}Therefore ${c}^{\prime}\cdot \mathcal{T}^{m-1}({c})=0$, because ${c}^{\prime}\in\mathcal{C}^\bot$. So
{\small
\begin{eqnarray*}
0&=&{c}^{\prime}\cdot \mathcal{T}^{m-1}({c})\\
&=&c^{\prime}_{1,0}c_{1,1}+\dots+c^{\prime}_{1,r-2}c_{1,r-1}+c^{\prime}_{1,r-1}c_{1,0}
+c^{\prime}_{2,0}c_{2,1}+\dots+c^{\prime}_{2,s-2}c_{2,s-1}+c^{\prime}_{2,s-1}c_{2,0}\\
&+&c^{\prime}_{3,0}c_{3,1}+\dots+c^{\prime}_{3,t-2}c_{3,t-1}+c^{\prime}_{3,t-1}c_{3,0}\\
&=&c_{1,0}c^{\prime}_{1,r-1}+c_{1,1}c^{\prime}_{1,0}+\dots+c_{1,r-1}c^{\prime}_{1,r-2}+
c_{2,0}c^{\prime}_{2,s-1}+c_{2,1}c^{\prime}_{2,0}+\dots+c_{2,s-1}c^{\prime}_{2,s-2}\\
&+&c_{3,0}c^{\prime}_{3,t-1}+c_{3,1}c^{\prime}_{3,0}+\dots+c_{3,t-1}c^{\prime}_{3,t-2}\\
&=&{c}\cdot \mathcal{T}({c}^{\prime}).
\end{eqnarray*}}
\hspace{-1mm}Thus $\mathcal{T}({c}^{\prime})\in\mathcal{C}^{\bot}$. Consequently $\mathcal{C}^{\bot}$ is a triple cyclic code of lenght $(r,s,t)$ over $\mathbb{Z}_2$.\\
($\Leftarrow$) By the fact that for every linear code $\mathcal{C}$, $(\mathcal{C}^{\bot})^\bot=\mathcal{C}$.\\
For the second statement use Proposition \ref{P6}.
\end{proof}

\begin{proposition}\label{PP3}
Let $\mathcal{C}$ be a triple cyclic code of lenght $(r,s,t)$ over $\mathbb{Z}_2$. Then 
\begin{enumerate}
\item $(\mathcal{C}_r)^\bot=\{{a}\in\mathbb{Z}_2^r|({a}\mid0\mid0)\in\mathcal{C}^\bot\}=
\{a(x)\in\frac{\mathbb{Z}_2[x]}{\langle x^r-1\rangle}|(a(x)\mid0\mid0)\in\mathcal{C}^\bot\}$, and so $(\mathcal{C}_r)^\bot\subseteq(\mathcal{C}^\bot)_r$.
\item $(\mathcal{C}_s)^\bot=\{{b}\in\mathbb{Z}_2^s|(0\mid{b}\mid0)\in\mathcal{C}^\bot\}
=\{b(x)\in\frac{\mathbb{Z}_2[x]}{\langle x^s-1\rangle}|(0\mid b(x)\mid0)\in\mathcal{C}^\bot\}$, and so $(\mathcal{C}_s)^\bot\subseteq(\mathcal{C}^\bot)_s$.
\item $(\mathcal{C}_t)^\bot=\{{c}\in\mathbb{Z}_2^t|(0\mid0\mid{c})\in\mathcal{C}^\bot\}
=\{c(x)\in\frac{\mathbb{Z}_2[x]}{\langle x^t-1\rangle}|(0\mid0\mid c(x))\in\mathcal{C}^\bot\}$, and so $(\mathcal{C}_t)^\bot\subseteq(\mathcal{C}^\bot)_t$.

\end{enumerate}
\end{proposition}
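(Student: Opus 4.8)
The plan is to prove part (1) in full; parts (2) and (3) then follow verbatim by transposing the argument to the second and third blocks, using the two propositions stated just after Proposition \ref{P5} (its analogues for the $s$- and $t$-components) in place of Proposition \ref{P5} itself. Write $M=\{{a}\in\mathbb{Z}_2^r\mid({a}\mid0\mid0)\in\mathcal{C}^\bot\}$ for the middle set. Under the isomorphism $\Psi$ the polynomial description on the right is literally the same set, so the second equality is automatic, and the task reduces to proving $(\mathcal{C}_r)^\bot=M$ together with the final inclusion $M\subseteq(\mathcal{C}^\bot)_r$.

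The whole proof runs off a single equivalence. An element ${a}\in\mathbb{Z}_2^r$ lies in $M$ exactly when $({a}\mid0\mid0)\circ c=0 \bmod (x^m-1)$ for every $c=(c_r\mid c_s\mid c_t)\in\mathcal{C}$, by the characterization of $\mathcal{C}^\bot$ via the product $\circ$ established above. Since $({a}\mid0\mid0)$ has zero second and third components, the hypotheses of Proposition \ref{P5} hold for the pair $({a}\mid0\mid0)$ and $c$, so
$$({a}\mid 0\mid 0)\circ c = 0 \bmod (x^m-1) \quad\Longleftrightarrow\quad a(x)c_r^*(x) = 0 \bmod (x^r-1),$$
and by the cyclic orthogonality lemma the right-hand condition is equivalent to ${a}$ being orthogonal to $c_r$ and to all of its cyclic shifts.

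For $M\subseteq(\mathcal{C}_r)^\bot$ I would take ${a}\in M$ and an arbitrary $c_r\in\mathcal{C}_r$, lift $c_r$ to a codeword $(c_r\mid c_s\mid c_t)\in\mathcal{C}$ (possible by the definition of the projection), and read off from the equivalence that $a(x)c_r^*(x)=0$, hence ${a}\cdot c_r=0$; as $c_r$ ranges over $\mathcal{C}_r$ this gives ${a}\in(\mathcal{C}_r)^\bot$. For the reverse inclusion I would take ${a}\in(\mathcal{C}_r)^\bot$ and fix any $c=(c_r\mid c_s\mid c_t)\in\mathcal{C}$; here I use that $\mathcal{C}_r$ is a cyclic code, so every cyclic shift of $c_r$ again lies in $\mathcal{C}_r$ and is therefore orthogonal to ${a}$. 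The lemma then gives $a(x)c_r^*(x)=0 \bmod (x^r-1)$, and the equivalence yields $({a}\mid0\mid0)\circ c=0$; since $c$ was arbitrary, $({a}\mid0\mid0)\in\mathcal{C}^\bot$, i.e. ${a}\in M$. Finally $M\subseteq(\mathcal{C}^\bot)_r$ is immediate, since each ${a}\in M$ is the first-block projection of $({a}\mid0\mid0)\in\mathcal{C}^\bot$.

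The one point to watch---and the place where the two inclusions are genuinely different rather than formally dual---is the use of cyclicity: the forward inclusion applies the lemma to a single lifted $c_r$ and needs only that orthogonality to $c_r$ follows, whereas the reverse inclusion must first promote ``${a}\perp c_r$'' to ``${a}\perp$ every shift of $c_r$'' before the lemma is applicable, and this is exactly where $\mathcal{C}_r$ being cyclic is indispensable. Beyond this, no step requires computation; all the work is in correctly threading the hypotheses through Proposition \ref{P5} and the orthogonality lemma.
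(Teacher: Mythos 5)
Your proof is correct, but it takes a much heavier route than the paper intends: the paper's proof is literally ``Straightforward,'' and the intended argument is a one-line computation with the ordinary inner product. Indeed, for any $c=(c_r\mid c_s\mid c_t)\in\mathcal{C}$ one has $({a}\mid 0\mid 0)\cdot(c_r\mid c_s\mid c_t)={a}\cdot c_r$, so $({a}\mid0\mid0)\in\mathcal{C}^\bot$ if and only if ${a}\cdot c_r=0$ for every $c_r$ occurring as a first-block projection, i.e.\ if and only if ${a}\in(\mathcal{C}_r)^\bot$; the final inclusion into $(\mathcal{C}^\bot)_r$ is then immediate, exactly as you say. You instead route everything through the $\circ$-characterization of $\mathcal{C}^\bot$, Proposition \ref{P5}, and the cyclic orthogonality lemma; each application is legitimate (the hypotheses of Proposition \ref{P5} do hold since your $u$ has zero second and third components, and all three tools are established before Proposition \ref{PP3}), so the argument goes through. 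What the detour costs you is a misleading structural conclusion: your closing claim that the cyclicity of $\mathcal{C}_r$ is ``indispensable'' for the reverse inclusion is an artifact of your chosen route, not of the statement. The direct inner-product proof never invokes cyclicity at all --- the proposition is a purely linear-algebraic fact about duals and block projections, valid for any binary linear code partitioned into three blocks. (Your route also has to quietly sidestep the degenerate case $c_r=0$, where $\deg(v_1(x))$ in the definition of $\circ$ is undefined; the direct argument has no such edge case.)
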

\begin{proof}
Straightforward.
\end{proof}

\begin{proposition}\label{PP2}
Let $\mathcal{C}=\big\langle(F_1(x)\mid0\mid0),(0\mid F_2(x)\mid0),(G_1(x)\mid G_2(x)\mid G_3(x))\big\rangle$ be a triple cyclic code of lenght $(r,s,t)$ over $\mathbb{Z}_2$. Then 
\begin{enumerate}
\item $(\frac{(x^r-1)^2}{F_1^*(x)G_1^*(x)}\mid0\mid0),(0\mid\frac{(x^s-1)^2}{F_2^*(x)G_2^*(x)}\mid0),(0\mid0\mid\frac{x^t-1}{G_3^*(x)})\in\mathcal{C}^{\bot}$.
\item $(\mathcal{C}_r)^\bot\subseteq(\mathcal{C}^\bot)_r\subseteq\langle\frac{x^r-1}{F_1^*(x)}\rangle$ and $(\mathcal{C}_s)^\bot\subseteq(\mathcal{C}^\bot)_s\subseteq\langle\frac{x^s-1}{F_2^*(x)}\rangle$.
\item If $F_1(x)\mid G_1(x)$, then $(\mathcal{C}^\bot)_r=(\mathcal{C}_r)^\bot=\langle\frac{x^r-1}{F_1^*(x)}\rangle$ and so 
$|(\mathcal{C}^\perp)_r|=2^{deg(F_1(x))}$.
\item If $F_2(x)\mid G_2(x)$, then $(\mathcal{C}^\bot)_s=(\mathcal{C}_s)^\bot=\langle\frac{x^s-1}{F_2^*(x)}\rangle$ and so $|(\mathcal{C}^\perp)_s|=2^{deg(F_2(x))}$.
\item If $F_1(x)\mid G_1(x)$ and $F_2(x)\mid G_2(x)$, then $\mathcal{C}^\bot=\langle\frac{x^r-1}{F_1^*(x)}\rangle\times\langle\frac{x^s-1}{F_2^*(x)}\rangle\times\langle\frac{x^t-1}{G_3^*(x)}\rangle$
and  $|\mathcal{C}^\perp|=2^{deg(F_1(x))+deg(F_2(x))+deg(G_3(x))}$.
Moreover, $(\mathcal{C}^\bot)_t=(\mathcal{C}_t)^\bot=\langle\frac{x^t-1}{G_3^*(x)}\rangle$ and so $|(\mathcal{C}^\perp)_t|=2^{deg(G_3(x))}$. 
\end{enumerate}
\end{proposition}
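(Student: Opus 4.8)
The plan is to work entirely through the characterization of the dual proved in the preceding proposition: a vector $u$ lies in $\mathcal{C}^{\bot}$ precisely when $u(x)\circ c(x)\equiv 0\pmod{x^m-1}$ for all $c\in\mathcal{C}$, and since $\circ$ is $\mathbb{Z}_2[x]$-bilinear it is enough to test $u$ against the three module generators $(F_1(x)\mid0\mid0)$, $(0\mid F_2(x)\mid0)$ and $(G_1(x)\mid G_2(x)\mid G_3(x))$. Part (1) is the core computation; parts (2)--(5) then follow formally by feeding it into the projection inclusions of Proposition \ref{PP3}, the generator/dual formulas of Proposition \ref{P2}, and the separability criterion of Proposition \ref{P11}.

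For Part (1) I would check each of the three displayed vectors against each of the three generators, invoking Proposition \ref{P5} and its two stated analogues to collapse every relevant $\circ$-product into a single divisibility condition modulo $x^r-1$, $x^s-1$ or $x^t-1$. Most of the nine checks are vacuous because the supports are disjoint in the relevant block. The cleanest genuine check is the third vector $(0\mid0\mid\frac{x^t-1}{G_3^*(x)})$ against the mixed generator: the analogue of Proposition \ref{P5} reduces it to $\frac{x^t-1}{G_3^*(x)}\,G_3^*(x)=x^t-1\equiv0\pmod{x^t-1}$, where $G_3(x)\mid x^t-1$ (hence $G_3^*(x)\mid x^t-1$) guarantees the quotient is an honest polynomial. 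For the first vector I would reduce orthogonality to the mixed generator to $\frac{(x^r-1)^2}{F_1^*(x)G_1^*(x)}G_1^*(x)=\frac{(x^r-1)^2}{F_1^*(x)}=(x^r-1)\frac{x^r-1}{F_1^*(x)}\equiv0$, which is immediate since $F_1^*(x)\mid x^r-1$, while orthogonality to $(0\mid F_2\mid0)$ is vacuous; the symmetric statements handle the second vector.

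Once Part (1) is secured, Part (2) is a sandwich: Proposition \ref{PP3} gives $(\mathcal{C}_r)^{\bot}\subseteq(\mathcal{C}^{\bot})_r$, while orthogonality of any $(u_1\mid u_2\mid u_3)\in\mathcal{C}^{\bot}$ to the pure generator $(F_1\mid0\mid0)$ forces $u_1 F_1^*(x)\equiv0\pmod{x^r-1}$, i.e.\ $\frac{x^r-1}{F_1^*(x)}\mid u_1$, so $(\mathcal{C}^{\bot})_r\subseteq\langle\frac{x^r-1}{F_1^*(x)}\rangle$; the $s$-block is identical. For Part (3), when $F_1\mid G_1$ Proposition \ref{P2}(3) (or \ref{P11}) gives $\mathcal{C}_r=\langle F_1\rangle$, whence $(\mathcal{C}_r)^{\bot}=\langle\frac{x^r-1}{F_1^*(x)}\rangle$ by Proposition \ref{P2}(4); the two ends of the sandwich of Part (2) now coincide, forcing equality, and the degree count $\deg\frac{x^r-1}{F_1^*(x)}=r-\deg F_1$ yields $|(\mathcal{C}^{\bot})_r|=2^{\deg F_1}$. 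Part (4) is symmetric. Finally, in Part (5) both divisibilities make $\mathcal{C}$ separable by Proposition \ref{P11}, $\mathcal{C}=\langle F_1\rangle\times\langle F_2\rangle\times\langle G_3\rangle$, so $\mathcal{C}^{\bot}$ is the direct product of the three cyclic duals $\langle\frac{x^r-1}{F_1^*(x)}\rangle$, $\langle\frac{x^s-1}{F_2^*(x)}\rangle$, $\langle\frac{x^t-1}{G_3^*(x)}\rangle$ furnished by Proposition \ref{P2}(4); multiplying cardinalities gives $2^{\deg F_1+\deg F_2+\deg G_3}$ (consistent with $|\mathcal{C}|\,|\mathcal{C}^{\bot}|=2^{r+s+t}$ via Theorem \ref{T2}(5)), and reading off the last block yields $(\mathcal{C}^{\bot})_t=(\mathcal{C}_t)^{\bot}=\langle\frac{x^t-1}{G_3^*(x)}\rangle$.

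The step I expect to be the real obstacle is the membership of the first two vectors of Part (1) in $\mathcal{C}^{\bot}$ — not their orthogonality to the mixed generator, which telescopes at once, but the remaining orthogonality to the pure generator $(F_1\mid0\mid0)$, which demands $\frac{(x^r-1)^2}{F_1^*(x)G_1^*(x)}F_1^*(x)=\frac{(x^r-1)^2}{G_1^*(x)}\equiv0\pmod{x^r-1}$, together with the prior question of whether this quotient is even a polynomial. Both hinge on controlling $G_1^*(x)$ against $x^r-1$, so I would first pin down the divisibility relations tying $F_1,G_1$ to $x^r-1$ (through Proposition \ref{P2} and the normalizations of Proposition \ref{P1}) before running the $\circ$-computation for these two vectors.
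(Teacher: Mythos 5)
Your parts (2)--(5) are correct and essentially reproduce the paper's own argument: the sandwich $(\mathcal{C}_r)^\bot\subseteq(\mathcal{C}^\bot)_r\subseteq\langle\frac{x^r-1}{F_1^*(x)}\rangle$ in (2) is exactly the paper's proof, and in (3) you close the sandwich via Proposition \ref{P2}(3),(4), where the paper instead normalizes $G_1(x)=0$ by Proposition \ref{P9} and invokes Proposition \ref{PP3}(1); both routes work, as does your use of Proposition \ref{P11} in (5). The genuine gap is precisely the step you flagged and left open in part (1): the orthogonality of $(\frac{(x^r-1)^2}{F_1^*(x)G_1^*(x)}\mid0\mid0)$ to $(F_1(x)\mid0\mid0)$, which by Proposition \ref{P5} is the claim $\frac{(x^r-1)^2}{G_1^*(x)}\equiv 0 \pmod{x^r-1}$, and this is equivalent (in the factorial ring $\mathbb{Z}_2[x]$) to $G_1^*(x)\mid x^r-1$. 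Your proposed repair --- extracting this divisibility from Proposition \ref{P2} and the normalization of Proposition \ref{P1} --- cannot succeed: Proposition \ref{P2}(1) only gives $F_1(x)\mid\frac{x^t-1}{G_3(x)}G_1(x)$, and Proposition \ref{P1}(1) only bounds $\deg(G_1(x))$ by $\deg(F_1(x))$; neither forces $G_1(x)\mid x^r-1$, and in fact nothing in the paper's hypotheses does. For instance, with $r=t=15$, $F_1(x)=x^4+x+1$, $G_1(x)=(x^2+x+1)^2$, $G_2(x)=0$, $G_3(x)=\frac{x^{15}-1}{x^4+x+1}$, all standing constraints hold ($F_1(x)\mid x^r-1$, $\deg(G_1)\le\deg(F_1)$, $F_1^*(x)G_1^*(x)\mid(x^r-1)^2$, and Proposition \ref{P2}(1)), yet $\frac{(x^r-1)^2}{G_1^*(x)}=\big(\frac{x^{15}-1}{x^2+x+1}\big)^2$ is not divisible by $x^{15}-1$, since $x^{15}-1$ is squarefree and the factor $x^2+x+1$ is absent.

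You should also know how the paper itself handles this point: it simply writes $\frac{(x^r-1)^2}{F_1^*(x)G_1^*(x)}F_1^*(x)=(x^r-1)\frac{x^r-1}{G_1^*(x)}$ and declares it zero modulo $x^r-1$, i.e.\ it tacitly assumes $G_1^*(x)\mid x^r-1$ (and likewise $G_2^*(x)\mid x^s-1$) --- the same unstated hypothesis needed for the displayed fractions to be well defined in the first place. So your instinct about where the difficulty sits was exactly right, but the gap is not closable from the stated hypotheses; it has to be imported as an extra assumption ($G_1(x)\mid x^r-1$ and $G_2(x)\mid x^s-1$), after which the verification collapses to the same one-line telescoping you already carried out for the products against the mixed generator. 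With that assumption made explicit, your proposal and the paper's proof coincide in substance.
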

\begin{proof}
(1) We only show that $(\frac{(x^r-1)^2}{F_1^*(x)G_1^*(x)}\mid0\mid0)\in\mathcal{C}^{\bot}$.
Notice that $\frac{(x^r-1)^2}{F_1^*(x)G_1^*(x)}F_1^*(x)=(x^r-1)\frac{(x^r-1)}{G_1^*(x)}=0$~ mod $(x^r-1)$. Now,
Proposition \ref{P5} implies that $(\frac{(x^r-1)^2}{F_1^*(x)G_1^*(x)}\mid0\mid0)\circ(F_1(x)\mid0\mid0)=0$~ mod $(x^m-1)$.
Similarly we can show that $(\frac{(x^r-1)^2}{F_1^*(x)G_1^*(x)}\mid0\mid0)\circ(G_1(x)\mid G_2(x)\mid G_3(x))=0$~ mod $(x^m-1)$.
Clearly $(\frac{(x^r-1)^2}{F_1^*(x)G_1^*(x)}\mid0\mid0)\circ(0\mid F_2(x)\mid0)=0$~ mod $(x^m-1)$. So the result follows.\\
(2) We prove that $(\mathcal{C}^\bot)_r\subseteq\langle\frac{x^r-1}{F_1^*(x)}\rangle$. Let $f(x)\in(\mathcal{C}^\bot)_r$. Then there exist $g(x)\in\frac{\mathbb{Z}_2[x]}{\langle x^s-1\rangle}$ and 
$h(x)\in\frac{\mathbb{Z}_2[x]}{\langle x^t-1\rangle}$ such that $(f(x)\mid g(x)\mid h(x))\in\mathcal{C}^\bot$. Hence
$(f(x)\mid g(x)\mid h(x))\circ(F_1(x)\mid0\mid0)=0$~ mod $(x^m-1)$. So $f(x)F_1^*(x)=0$~mod $(x^r-1)$, see Proposition \ref{P5}.
Thus, there exists a $\lambda(x)\in\mathbb{Z}_2[x]$ such that $f(x)=\lambda(x)\frac{x^r-1}{F_1^*(x)}$. Consequently
$f(x)\subseteq\langle\frac{x^r-1}{F_1^*(x)}\rangle$ and we are done. Similarly it can be shown that $(\mathcal{C}^\bot)_s\subseteq\langle\frac{x^s-1}{F_2^*(x)}\rangle$.\\
(3) Suppose that $F_1(x)\mid G_1(x)$, then by Proposition \ref{P9} we may assume that
$G_1(x)=0$. Hence $(\frac{x^r-1}{F_1^*(x)}\mid0\mid0)\in\mathcal{C}^{\bot}$, and so
 $\langle\frac{x^r-1}{F_1^*(x)}\rangle\subseteq(\mathcal{C}_r)^\bot$, by Proposition \ref{PP3}(1).
 Now, by part (2) we have that $(\mathcal{C}^\bot)_r=(\mathcal{C}_r)^\bot=\langle\frac{x^r-1}{F_1^*(x)}\rangle$.\\
(4) An argument similar to the proof of part (3) can be stated.\\
(5) Use Proposition \ref{P11}.
\end{proof}

\begin{proposition}\label{P3}
Let $\mathcal{C}=\big\langle(F_1(x)\mid0\mid0),(0\mid F_2(x)\mid0),(G_1(x)\mid G_2(x)\mid G_3(x))\big\rangle$ be a triple cyclic code of lenght $(r,s,t)$ over $\mathbb{Z}_2$ with the dual code $\mathcal{C}^{\bot}=\big\langle(\widehat{F_1}(x)\mid0\mid0),(0\mid \widehat{F_2}(x)\mid0),(\widehat{G_1}(x)\mid \widehat{G_2}(x)\mid \widehat{G_3}(x))\big\rangle$. Then
\begin{enumerate}
\item $(\mathcal{C}_r)^\bot=\langle\widehat{F}_1(x)\rangle$, $\widehat{F_1}(x)=\frac{x^r-1}{gcd(F_1^*(x),G_1^*(x))}$ and $$deg(\widehat{F_1}(x))=r-deg(gcd(F_1(x),G_1(x))).$$
\item $(\mathcal{C}_s)^\bot=\langle\widehat{F}_2(x)\rangle$, $\widehat{F_2}(x)=\frac{x^s-1}{gcd(F_2^*(x),G_2^*(x))}$ and $$deg(\widehat{F_2}(x))=s-deg(gcd(F_2(x),G_2(x))).$$
\item $\widehat{F_1}(x)\mid\frac{(x^r-1)^2}{F_1^*(x)G_1^*(x)}$ and $\widehat{F_2}(x)\mid\frac{(x^s-1)^2}{F_2^*(x)G_2^*(x)}$.
\item $(\mathcal{C}_t)^\bot\subseteq\langle\widehat{G}_3(x)\rangle$ and so $\widehat{G_3}(x)\mid\frac{x^t-1}{G_3^*(x)}$.
\item If $F_1(x)\mid G_1(x)$ and $F_2(x)\mid G_2(x)$, then $\widehat{G_3}(x)=\frac{x^t-1}{G_3^*(x)}$ and so 
$deg(\widehat{G_3}(x))=t-deg(G_3(x))$.
\item $\widehat{G_1}(x)=\nu(x)\frac{(x^r-1)}{F_1^*(x)}$ for some $\nu(x)\in\mathbb{Z}_2[x]$ with 
$$deg(\nu(x))\leq deg(F_1(x))-deg(gcd(F_1(x),G_1(x))).$$
\item $\widehat{G_2}(x)=\rho(x)\frac{(x^s-1)}{F_2^*(x)}$ for some $\rho(x)\in\mathbb{Z}_2[x]$ with 
$$deg(\rho(x))\leq deg(F_2(x))-deg(gcd(F_2(x),G_2(x))).$$
\item $\widehat{G_3}(x)=\sigma(x)\frac{(x^t-1)gcd(F_1^*(x)F_2^*(x),F_1^*(x)G_2^*(x),F_2^*(x)G_1^*(x))}{F_1^*(x)F_2^*(x)G_3^*(x)}$ for some $\sigma(x)\in\mathbb{Z}_2[x]$.
\end{enumerate}
\end{proposition}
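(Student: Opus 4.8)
The plan is to first pin down the shape of $\mathcal{C}^\perp$ and then read off each item. Since $\mathcal{C}^\perp$ is again a triple cyclic code, Theorem \ref{T1} guarantees it has a generating set of the stated form with $\widehat{F_1}(x)\mid x^r-1$, $\widehat{F_2}(x)\mid x^s-1$ and $\widehat{G_3}(x)\mid x^t-1$; moreover, rerunning the construction in the proof of Theorem \ref{T1} on $\mathcal{C}^\perp$ identifies $\langle\widehat{F_1}(x)\rangle$ with the ideal $\{a(x)\mid(a(x)\mid0\mid0)\in\mathcal{C}^\perp\}$ and $\langle\widehat{G_3}(x)\rangle$ with the projection $(\mathcal{C}^\perp)_t$ onto the third coordinate. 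These two identifications, together with Proposition \ref{PP3}, are the backbone of everything that follows.

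For (1), Proposition \ref{PP3}(1) gives $(\mathcal{C}_r)^\perp=\{a(x)\mid(a(x)\mid0\mid0)\in\mathcal{C}^\perp\}=\langle\widehat{F_1}(x)\rangle$. Since $(\mathcal{C}_r)^\perp=\langle\frac{x^r-1}{\gcd(F_1^*(x),G_1^*(x))}\rangle$ by Proposition \ref{P2}(4) and both generators divide $x^r-1$, the uniqueness of the monic divisor-generator of a cyclic code forces $\widehat{F_1}(x)=\frac{x^r-1}{\gcd(F_1^*(x),G_1^*(x))}$, and the degree formula follows from the reciprocal-polynomial identities ($\deg$ and $\gcd$ commute with $*$) together with Proposition \ref{P2}(3). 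Item (2) is identical with $r,s$ interchanged. Item (3) is immediate from $\big(\frac{(x^r-1)^2}{F_1^*(x)G_1^*(x)}\mid0\mid0\big)\in\mathcal{C}^\perp$ (Proposition \ref{PP2}(1)): its first coordinate lies in $(\mathcal{C}_r)^\perp=\langle\widehat{F_1}(x)\rangle$, so $\widehat{F_1}(x)$ divides it. For (4), Proposition \ref{PP3}(3) and the identification above give $(\mathcal{C}_t)^\perp\subseteq(\mathcal{C}^\perp)_t=\langle\widehat{G_3}(x)\rangle$, while $(0\mid0\mid\frac{x^t-1}{G_3^*(x)})\in\mathcal{C}^\perp$ (Proposition \ref{PP2}(1)) puts $\frac{x^t-1}{G_3^*(x)}$ into $(\mathcal{C}^\perp)_t$, whence $\widehat{G_3}(x)\mid\frac{x^t-1}{G_3^*(x)}$. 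For (5), the hypotheses make $\mathcal{C}$ separable (Proposition \ref{P11}), so Proposition \ref{PP2}(5) yields $(\mathcal{C}^\perp)_t=\langle\frac{x^t-1}{G_3^*(x)}\rangle$; comparing monic divisor-generators gives $\widehat{G_3}(x)=\frac{x^t-1}{G_3^*(x)}$. Items (6) and (7) come from Proposition \ref{PP2}(2): since $\widehat{G_1}(x)\in(\mathcal{C}^\perp)_r\subseteq\langle\frac{x^r-1}{F_1^*(x)}\rangle$, we may write $\widehat{G_1}(x)=\nu(x)\frac{x^r-1}{F_1^*(x)}$, and after invoking Proposition \ref{P1}(1) on $\mathcal{C}^\perp$ to assume $\deg\widehat{G_1}(x)\le\deg\widehat{F_1}(x)$, the bound $\deg\nu(x)\le\deg F_1(x)-\deg\gcd(F_1(x),G_1(x))$ drops out of (1); (7) is symmetric.

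Item (8) is the real work. The plan is to exploit the mixed orthogonality relation $\big(\widehat{G_1}(x)\mid\widehat{G_2}(x)\mid\widehat{G_3}(x)\big)\circ\big(G_1(x)\mid G_2(x)\mid G_3(x)\big)\equiv0\pmod{x^m-1}$ and to decouple its three blocks. Substituting $\widehat{G_1}(x)=\nu(x)\frac{x^r-1}{F_1^*(x)}$ and $\widehat{G_2}(x)=\rho(x)\frac{x^s-1}{F_2^*(x)}$ from (6),(7) and using Remark \ref{rem1} to collapse $\frac{x^r-1}{F_1^*(x)}\theta_{\frac{m}{r}}(x^r)=\frac{x^m-1}{F_1^*(x)}$ (and the analogues for the $s$- and $t$-blocks, with $G_3^*(x)\theta_{\frac{m}{t}}(x^t)=\frac{x^m-1}{(x^t-1)/G_3^*(x)}$), the relation becomes a single divisibility by $x^m-1$ that couples the three coordinates only through the parameters $\nu,\rho$. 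Reducing this to the third block modulo $x^t-1$ should force the quantity $\frac{(x^t-1)\gcd(F_1^*(x)F_2^*(x),F_1^*(x)G_2^*(x),F_2^*(x)G_1^*(x))}{F_1^*(x)F_2^*(x)G_3^*(x)}$ to divide $\widehat{G_3}(x)$, which is the assertion. The $\gcd$ of the three starred products is exactly what measures the best simultaneous cancellation achievable in the first two coordinates as $\nu,\rho$ range freely, and it is the reciprocal analogue of the $\gcd(F_1(x)F_2(x),F_1(x)G_2(x),F_2(x)G_1(x))$ of Proposition \ref{P2}(2). A degree cross-check via Theorem \ref{T2}(5) applied to both $\mathcal{C}$ and $\mathcal{C}^\perp$, together with (1) and (2), gives $\deg\widehat{G_3}(x)=t-\deg F_1(x)-\deg F_2(x)-\deg G_3(x)+\deg\gcd(F_1(x),G_1(x))+\deg\gcd(F_2(x),G_2(x))$ and keeps the bookkeeping honest. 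The main obstacle is precisely this decoupling: converting one congruence modulo $x^m-1$ that entangles all three blocks into a clean divisibility modulo $x^t-1$, and pinning down how the $\gcd$ term arises from the interaction of $F_1^*,F_2^*$ with $G_1^*,G_2^*$.
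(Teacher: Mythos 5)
Parts (1)--(7) of your proposal are correct and take essentially the same route as the paper: (1) and (2) combine Proposition \ref{PP3} with Proposition \ref{P2}(3),(4); (3), (4) and (5) read off Proposition \ref{PP2}; and (6), (7) come from orthogonality against $(F_1(x)\mid 0\mid 0)$ (equivalently Proposition \ref{PP2}(2)), with the degree bound obtained, as in the paper, from part (1) together with Proposition \ref{P1}(1) applied to $\mathcal{C}^\bot$.

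Part (8) is where the real work lies, and it is exactly the point where your argument stops being a proof: you write that reducing your mixed relation to the third block ``should force'' the claimed divisibility and that ``the main obstacle is precisely this decoupling'', but the decoupling is never performed. Concretely, after substituting (6) and (7) into $(\widehat{G_1}\mid\widehat{G_2}\mid\widehat{G_3})\circ(G_1\mid G_2\mid G_3)\equiv 0 \pmod{x^m-1}$ and collapsing via Remark \ref{rem1}, you get a single congruence
$$\nu(x)\frac{x^m-1}{F_1^*(x)}x^{a}G_1^*(x)+\rho(x)\frac{x^m-1}{F_2^*(x)}x^{b}G_2^*(x)+\widehat{G_3}(x)\frac{x^m-1}{x^t-1}x^{c}G_3^*(x)\equiv 0\pmod{x^m-1},$$
where $a,b,c$ are the exponents prescribed by the definition of $\circ$. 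The first two terms need not vanish modulo $x^m-1$, so there is no ``third block'' to reduce to: the congruence genuinely entangles the unknowns $\nu(x)$, $\rho(x)$ and $\widehat{G_3}(x)$. Your route could be salvaged --- quantify $\nu,\rho$ existentially, observe that the third term must then lie in the ideal $\big\langle\frac{(x^m-1)G_1^*(x)}{F_1^*(x)}\big\rangle+\big\langle\frac{(x^m-1)G_2^*(x)}{F_2^*(x)}\big\rangle$ of $\frac{\mathbb{Z}_2[x]}{\langle x^m-1\rangle}$, show this ideal is generated by $\frac{(x^m-1)\gcd(F_1^*(x)F_2^*(x),F_1^*(x)G_2^*(x),F_2^*(x)G_1^*(x))}{F_1^*(x)F_2^*(x)}$ using the gcd/reciprocal identities, and then descend from modulus $x^m-1$ to $x^t-1$ --- but none of these steps appear in your proposal, and they are not routine. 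The paper sidesteps the entanglement entirely with one explicit construction, which is the idea you are missing: setting $g(x)=\gcd(F_1(x)F_2(x),F_1(x)G_2(x),F_2(x)G_1(x))$, the combination
$$\frac{F_1(x)F_2(x)}{g(x)}*(G_1\mid G_2\mid G_3)+\frac{F_2(x)G_1(x)}{g(x)}*(F_1\mid 0\mid 0)+\frac{F_1(x)G_2(x)}{g(x)}*(0\mid F_2\mid 0)=\Big(0\mid 0\mid\frac{F_1(x)F_2(x)G_3(x)}{g(x)}\Big)$$
is a codeword of $\mathcal{C}$ supported only on the third block (signs are irrelevant in characteristic $2$). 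Pairing $(\widehat{G_1}\mid\widehat{G_2}\mid\widehat{G_3})$ with it under $\circ$ and invoking the third-coordinate analogue of Proposition \ref{P5} yields at once
$$\widehat{G_3}(x)\,\frac{F_1^*(x)F_2^*(x)G_3^*(x)}{\gcd(F_1^*(x)F_2^*(x),F_1^*(x)G_2^*(x),F_2^*(x)G_1^*(x))}\equiv 0\pmod{x^t-1},$$
since reciprocals commute with products and gcd's, and this is precisely assertion (8). Until you either carry out the ideal-theoretic computation or produce this pure-third-block codeword, item (8) remains unproven.
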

\begin{proof}
(1) Let $a(x)\in(\mathcal{C}_r)^\bot$. Then by Proposition \ref{PP3}(1), $(a(x)\mid0\mid0)\in\mathcal{C}^\bot$.
Hence, clearly $a(x)\in\langle\widehat{F_1}(x)\rangle$. So $(\mathcal{C}_r)^\bot\subseteq\langle\widehat{F_1}(x)\rangle$.
Since $(\widehat{F_1}(x)\mid0\mid0)\in\mathcal{C}^{\bot}$, again by Proposition \ref{PP3}(1), $\widehat{F_1}(x)\in(\mathcal{C}_r)^\bot$.
Thus $(\mathcal{C}_r)^\bot=\langle\widehat{F_1}(x)\rangle$. Now, see part (4) of Proposition \ref{P2}\\
(2) Similar to part (1).\\
(3) By Proposition \ref{PP3}, Proposition \ref{PP2}(1) and the previous parts.\\
(4) Similar to part (1).\\
(5) Notice that $(\mathcal{C}^\bot)_t=\langle\widehat{G_3}(x)\rangle$. Now, use Proposition \ref{PP2}(5).\\
(6) Since $(\widehat{G_1}(x)\mid \widehat{G_2}(x)\mid \widehat{G_3}(x))\in\mathcal{C}^{\bot}$, then from
$$(\widehat{G_1}(x)\mid \widehat{G_2}(x)\mid \widehat{G_3}(x))\circ(F_1(x)\mid0\mid0)=0\hspace{.5cm}\mbox{mod}~(x^m-1)$$
it follows that $\widehat{G_1}(x)F_1^*(x)=0$ mod $(x^r-1)$. Hence there exists a $\nu(x)\in\mathbb{Z}_2[x]$ such that
$\widehat{G_1}(x)=\nu(x)\frac{(x^r-1)}{F_1^*(x)}$. For the second claim, use part (1) and Proposition \ref{P1}(1).\\
(7) Similart to part (6).\\
(8) Set ${g}(x):=gcd(F_1(x)F_2(x),F_1(x)G_2(x),F_2(x)G_1(x))$. Notice that
{\small
\begin{eqnarray*}
\frac{F_1(x)F_2(x)}{{g}(x)}(G_1(x)\mid G_2(x)\mid G_3(x))&-&\frac{F_2(x)G_1(x)}{{g}(x)}(F_1(x)\mid0\mid0)\\
&-&\frac{F_1(x)G_2(x)}{{g}(x)}(0\mid F_2(x)\mid0)\\&=&(0\mid 0\mid\frac{F_1(x)F_2(x)G_3(x)}{{g}(x)})
\in\mathcal{C}.
\end{eqnarray*}}
Since $(\widehat{G_1}(x)\mid \widehat{G_2}(x)\mid \widehat{G_3}(x))\in\mathcal{C}^{\bot}$, then
$$(\widehat{G_1}(x)\mid \widehat{G_2}(x)\mid \widehat{G_3}(x))\circ(0\mid 0\mid\frac{F_1(x)F_2(x)G_3(x)}{{g}(x)})=0\hspace{0.5cm}\mbox{mod}~(x^m-1).$$
Hence $\widehat{G_3}(x)\frac{F_1^*(x)F_2^*(x)G_3^*(x)}{gcd(F_1^*(x)F_2^*(x),F_1^*(x)G_2^*(x),F_2^*(x)G_1^*(x))}=0$ mod~$(x^t-1)$.
Consequently there exists a $\sigma(x)\in\mathbb{Z}_2[x]$ such that  $\widehat{G_3}(x)=\sigma(x)\frac{(x^t-1)gcd(F_1^*(x)F_2^*(x),F_1^*(x)G_2^*(x),F_2^*(x)G_1^*(x))}{F_1^*(x)F_2^*(x)G_3^*(x)}$.
\end{proof}

The proof of the next proposition is similar to that of Proposition 3.3 of \cite{BFT}. 
\begin{proposition}\label{P4}
Let $\mathcal{C}=\big\langle(F_1(x)\mid0\mid0),(0\mid F_2(x)\mid0),(G_1(x)\mid 0\mid G_3(x))\big\rangle$ be a triple cyclic code of lenght $(r,s,t)$ over $\mathbb{Z}_2$. Then $\mathcal{C}$ is permutation equivalent to a code with the generator matrix in the form of
$$G =
 \left(\begin{array}{ccc|cc|ccc}
  I_{r-\deg(F_1(x))} & A_1 & A_2 &0&0& 0 & 0 & 0 \\
  0&0&0&I_{s-deg(F_2(x))}&C&0&0&0\\
  0 & B_{\kappa} & B &0&0& D_1 & I_\kappa &  0 \\
  0 & 0 & 0 &0&0& D_2  & D_3 & I_{t-\deg(G_3(x))-\kappa} 
 \end{array}\right),
$$
in which $B_\kappa$ is a full rank square matrix of size $\kappa\times\kappa$, where $\kappa=\deg(F_1(x))-\deg(\gcd(F_1(x),G_1(x)))$.
\end{proposition}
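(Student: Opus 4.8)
The plan is to start from the concrete generator matrix $M$ whose rows are the minimal generating set of Theorem~\ref{T2} --- the $k_1:=r-\deg(F_1)$ shifts $x^i*(F_1\mid 0\mid 0)$, the $k_2:=s-\deg(F_2)$ shifts $x^i*(0\mid F_2\mid 0)$, and the $k_3:=t-\deg(G_3)$ shifts $x^i*(G_1\mid 0\mid G_3)$ --- and then to reach the displayed form using row operations (which preserve $\mathcal{C}$) together with permutations of the coordinates (which replace $\mathcal{C}$ by a permutation-equivalent code). In block-column form, with columns split into the first $r$, middle $s$, and last $t$ positions, $M$ reads
$$M=\begin{pmatrix} M_{F_1} & 0 & 0 \\ 0 & M_{F_2} & 0 \\ M_{G_1} & 0 & M_{G_3}\end{pmatrix},$$
where $M_{F_1},M_{F_2},M_{G_3}$ are the usual banded, full-row-rank generator matrices of $\langle F_1\rangle,\langle F_2\rangle,\langle G_3\rangle$ and $M_{G_1}$ collects the first-coordinate parts of the shifts of $(G_1\mid 0\mid G_3)$. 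Because $G_2=0$, the middle $s$ columns are touched only by the second block of rows, so that block decouples: row-reducing $M_{F_2}$ and permuting the middle columns yields $[\,I_{k_2}\mid C\,]$, which is exactly the second row of the target matrix.

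Next I would reduce the first block of rows. Since $F_1\mid x^r-1$ is monic with nonzero constant term, $M_{F_1}$ is upper-triangular-banded with unit pivots in its first $k_1$ columns; row-reducing gives $[\,I_{k_1}\mid A\,]$ over the first $r$ columns, and adding suitable first-block rows to the third block clears the first $k_1$ columns of $M_{G_1}$ without disturbing the last $t$ columns (the first block is zero there). The third block then reads $(0\mid M_{G_1}'\mid 0\mid M_{G_3})$. By Proposition~\ref{P2}(3) the first-coordinate projection is $\mathcal{C}_r=\langle\gcd(F_1,G_1)\rangle$ of dimension $k_1+\kappa$, so a rank comparison forces $\mathrm{rank}(M_{G_1}')=\kappa$. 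I would then Gaussian-eliminate $M_{G_1}'$ inside the third block so that exactly $\kappa$ rows (block~$3$) carry the group-$1$ content while the remaining $k_3-\kappa$ rows (block~$4$) vanish on the first $r$ coordinates; permuting the last $\deg(F_1)$ columns of the first group to bring $\kappa$ independent columns to the front produces $[\,B_\kappa\mid B\,]$ with $B_\kappa$ a full-rank $\kappa\times\kappa$ block, and simultaneously splits $A$ into $[\,A_1\mid A_2\,]$. At this point the first row and the whole first group of columns already match $G$.

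It remains to normalize the last $t$ columns of the third block. Writing $\hat M$ for the row-transformed image of $M_{G_3}$, it still has full rank $k_3$; since block~$3$ has only $\kappa$ rows, block~$4$ must attain full rank $k_3-\kappa$ there. The decisive observation is that the admissible remaining row operations must not reintroduce group-$1$ entries into block~$4$, so they form a block-\emph{lower}-triangular group: arbitrary invertible operations within block~$3$ and within block~$4$, plus additions of block-$4$ rows into block-$3$ rows (harmless, as block~$4$ is now zero on the first $r$ coordinates). Using only these, together with permutations of the last $t$ columns, I would (i) reduce block~$4$ so its final $k_3-\kappa$ columns become $I_{k_3-\kappa}$ by selecting $k_3-\kappa$ independent columns as pivots, (ii) add block-$4$ rows to block-$3$ rows to clear those pivot columns out of block~$3$, leaving block~$3$ of residual rank $\kappa$ on the remaining $\deg(G_3)+\kappa$ columns, and (iii) bring $\kappa$ independent such columns into the middle sub-block and reduce block~$3$ internally to $[\,D_1\mid I_\kappa\mid 0\,]$; the column permutation in (iii) then splits the leading part of block~$4$ as $[\,D_2\mid D_3\,]$. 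Internal block-$3$ operations only left-multiply $[\,B_\kappa\mid B\,]$ by an invertible matrix, so $B_\kappa$ stays full rank, and the result is exactly $G$.

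The main obstacle is the last paragraph: one must verify the two rank facts ($\mathrm{rank}(M_{G_1}')=\kappa$ and that block~$4$ has full rank $k_3-\kappa$, forcing block~$3$'s residual rank to be exactly $\kappa$) that guarantee the needed pivots exist, and one must check that the restricted, block-lower-triangular row operations never destroy the already-fixed blocks $I_{k_1},I_{k_2}$ nor the full rank of $B_\kappa$. Everything else is bookkeeping of block sizes, which sum to $k_1+k_2+k_3=\dim\mathcal{C}$ as required.
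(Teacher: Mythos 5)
Your proof is correct, and it is essentially the argument the paper intends: the paper's own ``proof'' is just a pointer to Proposition 3.3 of \cite{BFT}, whose method is exactly this row-reduction of the generator matrix built from the minimal generating set of Theorem \ref{T2}, with the pivotal rank count $\mathrm{rank}(M_{G_1}')=\kappa$ obtained from $\mathcal{C}_r=\langle \gcd(F_1(x),G_1(x))\rangle$ (Proposition \ref{P2}(3)) and dimension bookkeeping via Theorem \ref{T2}(5). Your extension to the three-block setting, using $G_2(x)=0$ to decouple the middle $s$ coordinates, is precisely the adaptation the paper leaves to the reader.
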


\begin{proposition}\label{PPP4}
Let $\mathcal{C}=\big\langle(F_1(x)\mid0\mid0),(0\mid F_2(x)\mid0),(G_1(x)\mid 0\mid G_3(x))\big\rangle$ be a triple cyclic code of lenght $(r,s,t)$ over $\mathbb{Z}_2$. Then
$$|(\mathcal{C}^\perp)_r|=2^{deg(F_1(x))}, ~~|(\mathcal{C}^\perp)_s|=2^{deg(F_2(x))},~~|(\mathcal{C}^\perp)_t|=2^{deg(G_3(x))+\kappa},$$
where $\kappa=deg(F_1(x))-deg(\gcd(F_1(x),G_1(x)))$.
\end{proposition}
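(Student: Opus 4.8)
The plan is to extract all three projection sizes from a single parity-check matrix of $\mathcal{C}$, built from the generator matrix $G$ supplied by Proposition \ref{P4}. The elementary fact I will lean on is that if a code is the row span of a matrix $M$, then its projection onto a set of coordinates $X$ is the row span of the corresponding column-submatrix $M|_X$; consequently $|(\mathcal{C}^\perp)_X|=2^{\operatorname{rank}(H|_X)}$ for any generator matrix $H$ of $\mathcal{C}^\perp$ and any of the three coordinate blocks $X$ (the first $r$, the middle $s$, and the last $t$ coordinates). Thus the whole statement reduces to three rank computations on submatrices of a suitable $H$.

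First I would put $G$ into systematic form. The four identity blocks $I_{r-\deg(F_1(x))}$, $I_{s-\deg(F_2(x))}$, $I_\kappa$, $I_{t-\deg(G_3(x))-\kappa}$ occupy disjoint columns, and restricting $G$ to exactly those columns gives a block lower-triangular matrix with identity diagonal (the only subdiagonal block being $D_3$), hence invertible; so these columns form an information set. Adding $D_3$ times the $I_\kappa$-row to the $I_{t-\deg(G_3(x))-\kappa}$-row clears $D_3$ and yields $G=[\,I_k\mid P\,]$ after moving the information columns to the front. I then take $H=[\,P^\top\mid I_{n-k}\,]$ as a generator matrix of $\mathcal{C}^\perp$ (valid over $\mathbb{Z}_2$) and record, for each row and column of $H$, the original $r$/$s$/$t$ block it belongs to; its only nonzero off-identity entries are the transposes $A_1^\top,A_2^\top,C^\top,B_\kappa^\top,B^\top,D_1^\top,D_2^\top,D_3^\top$ of the blocks of $G$.

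Now I carry out the three counts. Restricting $H$ to the first $r$ coordinates leaves a matrix carrying the identity blocks $I_\kappa$ and $I_{\deg(F_1(x))-\kappa}$ in the columns lying under $A_1$ and under $A_2$, so its rank is $\kappa+(\deg(F_1(x))-\kappa)=\deg(F_1(x))$, giving $|(\mathcal{C}^\perp)_r|=2^{\deg(F_1(x))}$. Restricting to the middle $s$ coordinates leaves only the row-block with $C^\top$ under $I_{s-\deg(F_2(x))}$ and $I_{\deg(F_2(x))}$ under $C$, of rank $\deg(F_2(x))$, giving $|(\mathcal{C}^\perp)_s|=2^{\deg(F_2(x))}$. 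For the last $t$ coordinates, the redundant column-group lying under $D_1,D_2$ carries $I_{\deg(G_3(x))}$ in $H$ and contributes $\deg(G_3(x))$ independent pivots; after a column operation that eliminates the $D_3^\top$-coupling between the two information column-groups of $T$ (adding $D_3^\top$ times the columns under $I_\kappa$ to those under $I_{t-\deg(G_3(x))-\kappa}$), the remaining contribution comes from the two upper row-blocks, which there collapse to the $\deg(F_1(x))\times\kappa$ matrix stacking $B_\kappa^\top$ over $B^\top$. Since $B_\kappa$ is a full-rank $\kappa\times\kappa$ block by Proposition \ref{P4}, this stack already has rank $\kappa$ from $B_\kappa^\top$ and cannot exceed $\kappa$, so its rank is exactly $\kappa$; hence $\operatorname{rank}(H|_T)=\deg(G_3(x))+\kappa$ and $|(\mathcal{C}^\perp)_t|=2^{\deg(G_3(x))+\kappa}$.

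The block-size bookkeeping and the systematic reduction are routine. The load-bearing step is the $t$-coordinate count: one must see that dualizing turns the full-rank square block $B_\kappa$—which measures exactly the failure of $F_1(x)\mid G_1(x)$, namely $\kappa=\deg(F_1(x))-\deg(\gcd(F_1(x),G_1(x)))$—into the $\kappa$ surplus dimensions of $(\mathcal{C}^\perp)_t$ beyond the expected $\deg(G_3(x))$, and that after the column operation the second group of $T$-columns contributes nothing new. This is precisely where the hypothesis $G_2(x)=0$ (which decouples the middle projection) and the full-rankness of $B_\kappa$ enter; without them the last two projections would intertwine and the clean count would fail.
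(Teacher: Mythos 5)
Your proof is correct and takes essentially the same approach as the paper's own (one-line) proof: pass from the standard-form generator matrix of Proposition \ref{P4} to a parity-check matrix of $\mathcal{C}$ and read off the cardinalities of $(\mathcal{C}^\perp)_r$, $(\mathcal{C}^\perp)_s$, $(\mathcal{C}^\perp)_t$ as ranks of its three coordinate projections. The paper leaves every detail implicit, and your computations (in particular the factorization argument showing the $t$-projection has rank exactly $\deg(G_3(x))+\kappa$ via the full-rank block $B_\kappa$) correctly supply them.
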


\begin{proof}
The values of the cardinalities can be obtained from the projections on the first $r$, second $s$ and the last $t$ coordinates of the parity-check matrix of
$\mathcal{C}$.
\end{proof}

\begin{proposition}
Let $\mathcal{C}=\big\langle(F_1(x)\mid0\mid0),(0\mid F_2(x)\mid0),(G_1(x)\mid 0\mid G_3(x))\big\rangle$ be a triple cyclic code of lenght $(r,s,t)$ over $\mathbb{Z}_2$ with the dual code $\mathcal{C}^{\bot}=\big\langle(\widehat{F_1}(x)\mid0\mid0),(0\mid \widehat{F_2}(x)\mid0),(\widehat{G_1}(x)\mid \widehat{G_2}(x)\mid \widehat{G_3}(x))\big\rangle$. Then 
$$deg(\widehat{G_3}(x))=t-deg(G_3(x))-deg(F_1(x))+deg(\gcd(F_1(x),G_1(x)))$$ and 
$\widehat{G_3}(x)=\frac{(x^t-1)\gcd(F_1^*(x),G_1^*(x))}{F_1^*(x)G_3^*(x)}$.
\end{proposition}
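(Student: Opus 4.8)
The plan is to determine $\widehat{G_3}(x)$ first up to a polynomial factor, using the structural formula of Proposition \ref{P3}(8), and then to pin that factor down by a degree count coming from Proposition \ref{PPP4}.

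First I would establish the degree formula. Applying Theorem \ref{T1} to $\mathcal{C}^{\bot}$ and projecting the generators onto the last $t$ coordinates shows $(\mathcal{C}^{\bot})_t=\langle\widehat{G_3}(x)\rangle$ with $\widehat{G_3}(x)\mid x^t-1$; hence $(\mathcal{C}^{\bot})_t$ is a binary cyclic code of length $t$ and $|(\mathcal{C}^{\bot})_t|=2^{t-\deg(\widehat{G_3}(x))}$. On the other hand, Proposition \ref{PPP4} gives $|(\mathcal{C}^{\bot})_t|=2^{\deg(G_3(x))+\kappa}$ with $\kappa=\deg(F_1(x))-\deg(\gcd(F_1(x),G_1(x)))$. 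Equating the two counts yields $\deg(\widehat{G_3}(x))=t-\deg(G_3(x))-\deg(F_1(x))+\deg(\gcd(F_1(x),G_1(x)))$, which is the first assertion.

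Next I would specialize Proposition \ref{P3}(8) to the present code, where $G_2(x)=0$ and so $G_2^*(x)=0$. Then the triple gcd collapses, namely $\gcd(F_1^*(x)F_2^*(x),F_1^*(x)G_2^*(x),F_2^*(x)G_1^*(x))=\gcd(F_1^*(x)F_2^*(x),F_2^*(x)G_1^*(x))=F_2^*(x)\gcd(F_1^*(x),G_1^*(x))$. Substituting this into Proposition \ref{P3}(8) and cancelling the common factor $F_2^*(x)$ gives $\widehat{G_3}(x)=\sigma(x)\frac{(x^t-1)\gcd(F_1^*(x),G_1^*(x))}{F_1^*(x)G_3^*(x)}$ for some $\sigma(x)\in\mathbb{Z}_2[x]$.

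Finally I would pin down $\sigma(x)$ by matching degrees. Using degree preservation under the reciprocal and the identity $\gcd(F_1(x),G_1(x))^*=\gcd(F_1^*(x),G_1^*(x))$, the explicit fraction has degree $t+\deg(\gcd(F_1(x),G_1(x)))-\deg(F_1(x))-\deg(G_3(x))$, which is exactly $\deg(\widehat{G_3}(x))$ from the first step. Therefore $\deg(\sigma(x))=0$, so $\sigma(x)=1$ over $\mathbb{Z}_2$, giving $\widehat{G_3}(x)=\frac{(x^t-1)\gcd(F_1^*(x),G_1^*(x))}{F_1^*(x)G_3^*(x)}$. The step needing the most care is the degree bookkeeping: one must confirm that $(\mathcal{C}^{\bot})_t$ is generated by a divisor $\widehat{G_3}(x)$ of $x^t-1$ (so its cardinality is $2^{t-\deg(\widehat{G_3}(x))}$) and correctly simplify the triple gcd when $G_2(x)=0$, since only then does the explicit fraction become a genuine polynomial of the asserted degree and force $\sigma(x)$ to be the constant $1$.
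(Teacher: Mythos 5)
Your proof is correct and follows essentially the same route as the paper's: equate the cardinality $|(\mathcal{C}^\perp)_t|=2^{t-\deg(\widehat{G_3}(x))}$ with the count from Proposition \ref{PPP4} to get the degree formula, then specialize Proposition \ref{P3}(8) to $G_2(x)=0$ and use the degree match to force $\sigma(x)=1$. You merely spell out two steps the paper leaves implicit (the collapse of the triple $\gcd$ and the degree bookkeeping behind ``it is easy to see that $\deg(\sigma(x))=0$''), which is fine.
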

\begin{proof}
First, note that $(\mathcal{C}^\perp)_t=\langle\widehat{G_3}(x)\rangle$. So $|(\mathcal{C}^\perp)_t|=2^{t-deg(\widehat{G_3}(x))}$.
On the other hand $|(\mathcal{C}^\perp)_t|=2^{deg(G_3(x))+deg(F_1(x))-deg(\gcd(F_1(x),G_1(x)))}$, Proposition \ref{PPP4}. Therefore
$$deg(\widehat{G_3}(x))=t-deg(G_3(x))-deg(F_1(x))+deg(\gcd(F_1(x),G_1(x))).$$ 
Since $G_2(x)=0$, then $\widehat{G_3}(x)=\sigma(x)\frac{(x^t-1)\gcd(F_1^*(x),G_1^*(x))}{F_1^*(x)G_3^*(x)}$ for some $\sigma(x)\in\mathbb{Z}_2[x]$, by Proposition \ref{P3}(8). It is easy to see that $deg(\sigma(x))=0$, and so $\sigma(x)=1$.
\end{proof}

\begin{proposition}
Let $\mathcal{C}=\big\langle(F_1(x)\mid0\mid0),(0\mid F_2(x)\mid0),(G_1(x)\mid 0\mid G_3(x))\big\rangle$ be a triple cyclic code of lenght $(r,s,t)$ over $\mathbb{Z}_2$ with the dual code 
$$\mathcal{C}^{\bot}=\big\langle(\widehat{F_1}(x)\mid0\mid0),(0\mid \widehat{F_2}(x)\mid0),(\widehat{G_1}(x)\mid \widehat{G_2}(x)\mid \widehat{G_3}(x))\big\rangle.$$ Let $\widehat{G_1}(x)=\nu(x)\frac{(x^r-1)}{F_1^*(x)}$ and
$\zeta(x)=\frac{G_1(x)}{gcd(F_1(x),G_1(x))}$. Then
\begin{enumerate}
\item $\nu(x) x^{m-deg(G_1(x))-1}\zeta^*(x)+x^{m-deg(G_3(x))-1}=0$ mod $\frac{F_1^*(x)}{gcd(F_1^*(x),G_1^*(x))}$.
\item $\nu(x)=x^{m-deg(G_3(x))+deg(G_1(x))}(\zeta^*(x))^{-1}$ mod $\frac{F_1^*(x)}{gcd(F_1^*(x),G_1^*(x))}$.
\end{enumerate}
\end{proposition}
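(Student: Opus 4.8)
The plan is to extract everything from the single orthogonality relation that pairs the third generator of $\mathcal{C}^{\bot}$ with the third generator of $\mathcal{C}$. Since $(\widehat{G_1}(x)\mid \widehat{G_2}(x)\mid \widehat{G_3}(x))\in\mathcal{C}^{\bot}$ and $(G_1(x)\mid 0\mid G_3(x))\in\mathcal{C}$, we must have
$$(\widehat{G_1}(x)\mid \widehat{G_2}(x)\mid \widehat{G_3}(x))\circ(G_1(x)\mid 0\mid G_3(x))=0 \pmod{x^m-1}.$$
Because the second coordinate of $(G_1(x)\mid 0\mid G_3(x))$ is zero, the $s$-term of $\circ$ vanishes, leaving
$$\widehat{G_1}(x)\,\theta_{\frac{m}{r}}(x^r)\,x^{m-1-\deg G_1(x)}G_1^*(x)+\widehat{G_3}(x)\,\theta_{\frac{m}{t}}(x^t)\,x^{m-1-\deg G_3(x)}G_3^*(x)\equiv 0 \pmod{x^m-1}.$$

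Next I would substitute the two known shapes: $\widehat{G_1}(x)=\nu(x)\frac{x^r-1}{F_1^*(x)}$ by hypothesis, and $\widehat{G_3}(x)=\frac{(x^t-1)\gcd(F_1^*(x),G_1^*(x))}{F_1^*(x)G_3^*(x)}$ by the preceding proposition (valid here since $G_2(x)=0$). Applying Remark~\ref{rem1} collapses $\theta_{\frac{m}{r}}(x^r)(x^r-1)$ and $\theta_{\frac{m}{t}}(x^t)(x^t-1)$ into $x^m-1$. Writing $g(x):=\gcd(F_1(x),G_1(x))$, so that $g^*(x)=\gcd(F_1^*(x),G_1^*(x))$ and $G_1^*(x)=g^*(x)\zeta^*(x)$ with $\zeta(x)=G_1(x)/g(x)$, the first term's factor $G_1^*/F_1^*$ becomes $\zeta^*/\widetilde F$ and the second term's factor $g^*/F_1^*$ becomes $1/\widetilde F$, where $\widetilde F(x):=F_1^*(x)/\gcd(F_1^*(x),G_1^*(x))$. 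Hence the whole expression factors as
$$\frac{x^m-1}{\widetilde F(x)}\Big[\,\nu(x)\zeta^*(x)\,x^{m-1-\deg G_1(x)}+x^{m-1-\deg G_3(x)}\,\Big].$$

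Since $\widetilde F(x)\mid x^r-1\mid x^m-1$, the vanishing of this product modulo $x^m-1$ is equivalent, after cancelling $x^m-1$ in the integral domain $\mathbb{Z}_2[x]$, to $\widetilde F(x)$ dividing the bracket; this is precisely statement (1). For (2) I would solve that congruence for $\nu(x)$ modulo $\widetilde F(x)$. Here $x$ is a unit modulo $\widetilde F(x)$ because $\widetilde F(x)\mid x^r-1$ forces a nonzero constant term, and $\zeta^*(x)$ is a unit modulo $\widetilde F(x)$ because dividing out the gcd gives $\gcd(\zeta(x),F_1(x)/g(x))=1$, whence $\gcd(\zeta^*(x),\widetilde F(x))=1$. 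Multiplying through by the inverses of $x^{m-1-\deg G_1(x)}$ and $\zeta^*(x)$ gives $\nu(x)\equiv x^{\deg G_1(x)-\deg G_3(x)}(\zeta^*(x))^{-1}$, and using $x^m\equiv 1\pmod{\widetilde F(x)}$ to raise the exponent into nonnegative range produces $\nu(x)=x^{m-\deg G_3(x)+\deg G_1(x)}(\zeta^*(x))^{-1}\pmod{\widetilde F(x)}$.

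The main obstacle I anticipate is the bookkeeping in the factoring step: one must invoke the reciprocal-polynomial identities $G_1^*(x)=g^*(x)\zeta^*(x)$ and $\gcd(F_1(x),G_1(x))^*=\gcd(F_1^*(x),G_1^*(x))$ carefully so that both surviving terms genuinely share the denominator $\widetilde F(x)=F_1^*(x)/g^*(x)$, and one must justify cancelling $x^m-1$ to turn an equality modulo $x^m-1$ into a divisibility modulo $\widetilde F(x)$. Once the coprimality $\gcd(\zeta^*(x),\widetilde F(x))=1$ and the relation $x^m\equiv 1\pmod{\widetilde F(x)}$ are established, the inversion yielding (2) is routine.
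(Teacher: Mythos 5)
Your proposal is correct, and it is exactly the argument the paper intends: the paper's own ``proof'' is just a pointer to Propositions 4.18 and 4.19 of \cite{BFT}, whose method is precisely what you carry out --- pair $(\widehat{G_1}(x)\mid \widehat{G_2}(x)\mid \widehat{G_3}(x))$ with $(G_1(x)\mid 0\mid G_3(x))$ under $\circ$, substitute $\widehat{G_3}(x)=\frac{(x^t-1)\gcd(F_1^*(x),G_1^*(x))}{F_1^*(x)G_3^*(x)}$ from the preceding proposition, collapse $\theta$-factors via Remark \ref{rem1}, cancel $x^m-1$, and invert the units $x$ and $\zeta^*(x)$ modulo $\frac{F_1^*(x)}{\gcd(F_1^*(x),G_1^*(x))}$. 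Your writeup in fact supplies the details (coprimality of $\zeta^*(x)$ with $\frac{F_1^*(x)}{\gcd(F_1^*(x),G_1^*(x))}$, and $x^m\equiv 1$ there) that the paper leaves implicit.
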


\begin{proof}
The proof is similar to that of Proposition 4.18 and Corollary 4.19 of \cite{BFT}.
\end{proof}

\vspace{5mm} \noindent \footnotesize 
\begin{minipage}[b]{10cm}
Hojjat Mostafanasab \\
Department of Mathematics and Applications, \\ 
University of Mohaghegh Ardabili, \\ 
P. O. Box 179, Ardabil, Iran. \\
Email: h.mostafanasab@gmail.com, \hspace{1mm} h.mostafanasab@uma.ac.ir
\end{minipage}\\


\begin{thebibliography}{99}

\bibitem{BFT} J. Borges, C. Fern\'{a}ndez-C\'{o}rdoba and R. Ten-Valls, \emph{$\mathbb{Z}_2$-double cyclic codes}, arXiv preprint, arXiv: 1410.5604v1.

\bibitem{Cao2} Y. Cao, \emph{Generalized quasi-cyclic codes over Galois rings: structural properties and enumeration}, Appl. Algebra Eng. Commun. Comput. \textbf{22}, 219--233 (2011).

\bibitem{Cao1} Y. Cao, \emph{Structural properties and enumeration of $1$-generator generalized quasi-cyclic codes}, Des. Codes Crypto. \textbf{60}, 67--79 (2011).

\bibitem{Esmaeili} M. Esmaeili and S. Yari, \emph{Generalized quasi-cyclic codes: structural properties and codes construction}. Appl. Algebra Eng. Commun. Comput. \textbf{20}, 159-173 (2009).

\bibitem{Gao} J. Gao, F.-W. Fu, L. Shen and W. Ren, \emph{Some results on Generalized quasi-cyclic codes over $\mathbb{F}_q+u\mathbb{F}_q$}. IEICE Trans. Fund. \textbf{97}, 1005-1011 (2014).

\bibitem{GSWF} J. Gao, M. Shi, T. Wu and F. W. Fu, \emph{On double cyclic codes over $\mathbb{Z}_4$}, arXiv preprint, arXiv: 1501.01360v1.

\bibitem{GG} K. Guenda and T. A. Gulliver, {\it Construction of cyclic codes over $\mathbb{F}_2+u\mathbb{F}_2$ for DNA
computing}. { Appl. Algebra Eng. Commun. Comput.} \textbf{24}(6) 445--459 (2013).

\bibitem{HK} A. Hammons, P. V. Kumar, A. R. Calderbank, N. J. A. Slone and P. Sol´e, {\it The $\mathbb{Z}_4$ linearity
of Kerdock, Preparata, Goethals and related codes}, {IEEE Trans. Inf. Theory,} \textbf{40}(4) 301--319 (1994).

\bibitem{HP} W. C. Huffman and V. Pless, {\it Fundamentals of Error-Correcting Codes}, Cambridge University Press, Cambridge, 2003.

\bibitem{M} F. J. MacWilliams and N. J. A. Sloane, {\it The theory of error correcting codes}, North Holland, 1977.

\bibitem{Siap} I. Siap and N. Kulhan, \emph{The structure of generalized quasi-cyclic codes}, Appl. Math. E-Notes. \textbf{5}, 24--30 (2005).
\end{thebibliography}
\end{document}